\newtheorem{theorem}{Theorem}
\newtheorem*{theorem*}{Theorem}
\Crefname{theorem}{Theorem}{Theorems}
\crefname{theorem}{theorem}{theorems}
\newtheorem{proposition}{Proposition}
\newtheorem*{proposition*}{Proposition}
\Crefname{proposition}{Proposition}{Propositions}
\crefname{proposition}{proposition}{propositions}
\newtheorem*{corollary*}{Corollary}
\Crefname{corollary}{Corollary}{Corollaries}
\crefname{corollary}{corollary}{corollaries}
\newtheorem*{claim*}{Claim}
\Crefname{claim}{Claim}{Claims}
\crefname{claim}{claim}{claims}
\newtheorem{lemma}{Lemma}
\newtheorem*{lemma*}{Lemma}
\Crefname{lemma}{Lemma}{Lemmas}
\crefname{lemma}{lemma}{lemmas}
\newtheorem*{conjecture*}{Conjecture}
\Crefname{conjecture}{Conjecture}{Conjectures}
\crefname{conjecture}{conjecture}{conjectures}
\theoremstyle{definition}
\newtheorem{definition}{Definition}
\newtheorem*{definition*}{Definition}
\Crefname{definition}{Definition}{Definitions}
\crefname{definition}{definition}{definitions}
\newtheorem*{assumption*}{Assumption}
\Crefname{assumption}{Assumption}{Assumptions}
\crefname{assumption}{assumption}{assumptions}
\newenvironment{manualassumption}[1]{%
  \manualassumptioninner
}{\endmanualassumptioninner}
\theoremstyle{definition}
\newcommand{\E}{E}
\newcommand{\R}{\mathds{R}}
\newcommand{\Q}{\mathbb{Q}}
\newcommand{\indep}{\rotatebox[origin=c]{90}{$\models$}}
\renewcommand{\d}{\textup{d}}
\renewcommand{\P}{\mathbb{P}}
\newcommand{\PObs}{\P^{\textup{Obs}}}
\newcommand{\Pobs}{\PObs}
\newcommand{\PTrue}{\P^{\textup{True}}}
\newcommand{\PTarget}{\P^{\textup{Target}}}
\newcommand{\ProbManip}{\eta}
\DeclareMathOperator*{\argmin}{arg\,min}
\title{Sensitivity Analysis for Linear Estimators}
\date{\today \\ ~\\ 
\textbf{LATEST VERSION \href{https://jacobdorn.info/files/LInfty.pdf}{HERE}.}}
\author{Jacob Dorn and Luther Yap\footnote{This material is based upon work supported by the National Science Foundation Graduate Research Fellowship Program under Grant No. DGE-2039656. Any opinions, findings, and conclusions or recommendations expressed in this material are those of the authors and do not necessarily reflect the views of the National Science Foundation.}}
\begin{document}

\onehalfspacing

\maketitle

\begin{abstract}
We propose a novel sensitivity analysis framework for linear estimators with identification failures that can be viewed as seeing the wrong outcome distribution. Our approach measures the degree of identification failure through the change in measure between the observed distribution and a hypothetical target distribution that would identify the causal parameter of interest. The framework yields a sensitivity analysis that generalizes existing bounds for Average Potential Outcome (APO), Regression Discontinuity (RD), and instrumental variables (IV) exclusion failure designs. Our partial identification results extend results from the APO context to allow even unbounded likelihood ratios. Our proposed sensitivity analysis consistently estimates sharp bounds under plausible conditions and estimates valid bounds under mild conditions. We find that our method performs well in simulations even when targeting a discontinuous and nearly infinite bound.
\end{abstract}

\section{Introduction}

Many important estimators in economics are weighted averages of observed outcomes. These estimators leverage powerful identifying restrictions to target meaningful estimands. In observational settings, these identifying restrictions may not be satisfied: many treatment choices can be selected on unobservables, manipulators can strategically sort across treatment cutoffs, and instruments can directly affect outcomes. 

In this paper, we propose a framework for the sensitivity analysis of identification failures for linear estimators. The framework proceeds as follows. First, we define a target distribution: a synthetic distribution over observed variables that would enable a standard estimator to point-identify the causal estimand of interest. Second, we consider a structural model 
that implies restrictions on the divergence between the target population and the population the practitioner observes. Third, we leverage work from the literature in statistics and distributionally robust optimization to estimate bounds implied by the restrictions from the second step. The framework is especially powerful when the implied restrictions correspond to a family of restrictions on the Radon-Nikodym derivative (the generalization of likelihood ratio to allow point masses) between the distribution of observables under the target distribution and under the distribution the observed data is drawn from. 

Our proposed framework has several empirical advantages. The approach typically yields a sensitivity analysis:  under the strongest restriction, the bounds reduce to the standard point estimate; under the weakest restriction, the bounds reduce to a worst-case exercise.  Between the two extremes, partial identification bounds can often be estimated using modern tools from the literature on distributionally robust optimization (DRO). The partial identification bounds can often be written as an explicit closed form moment of the observed data. 


We illustrate our framework for $L_\infty$ bounds on the Radon–Nikodym derivative between the target and observed distributions. We extend existing results from the average potential outcome (APO) literature where distributional restrictions correspond to limits on treatment selection \citep{tan2006distributional, masten2018identification, zhao2019sensitivity, dorn2022sharp}. We extend this literature to obtain a simple closed-form characterization of the identified set, even with unbounded changes in measure. The plug-in estimator is consistent for the sharp bounds when given consistent estimates of the primary analysis nuisance functions and a certain conditional outcome quantile function. When the conditional outcome quantile function is inconsistent, the plug-in estimator yields valid bounds that are too wide rather than too narrow.

We apply our framework to yield novel results for three applications. In our first application, we study bounds on the APO with a treatment that may be selected on an unobserved confounder.
Our framework nests unconfoundedness, Manski-type bounds that restrict only the support of the unobserved potential outcomes, and \cite{tan2006distributional}'s Marginal Sensitivity Model as special cases. As a corollary, we obtain a simpler characterization of bounds under \cite{masten2018identification}'s conditional c-dependence model.

In our second application, we study bounds on causal effects from sharp RD with one-sided manipulation.
As  \citet{mccrary2008manipulation} notes, the RD assumption of no manipulation is testable. When \citeauthor{mccrary2008manipulation}'s test fails,  \cite{gerard2020bounds} propose a worst-case bound on the conditional average treatment effect for non-manipulators: a Conditional Local Average Treatment Effect (CLATE). We show that a stronger restriction on manipulation choice allows us to obtain meaningful bounds on the more standard conditional average treatment effect (CATE), which to the best of our knowledge has been an open issue in the literature. Our framework provides a sensitivity analysis by nesting an unconfoundedness-type assumption and the \cite{gerard2020bounds} assumption as extreme cases.

In our third application, we study treatment effects with an instrument that fails the exclusion restriction. When the instrument is allowed to affect the outcome directly, we consider estimation of a generic weighted average of local average treatment effects (LATEs) across instrument values.  In the continuous outcome case, we contribute a sensitivity analysis for a measure of exclusion failure that is unit-free. 
The proposed bounds are simple and tractable, but can be wider than the bounds implied by the original model.

We illustrate the value of our approach by simulation. We study estimation of APO bounds under \citeauthor{masten2018identification}'s conditional c-dependence model, which restricts the difference between observed and true propensities to be at most $c$. $c = 0$ corresponds to unconfoundedness; in our example, the bounds discontinuously become unbounded as $c$ crosses $0.1$. We implement a simple plug-in estimator and percentile bootstrap that leverage a fixed grid of quantile estimates across bootstraps and values of $c$. We find that a plug-in boostrap approach
achieves excellent bias and coverage properties for small $c$; increasingly over-covers as $c$ grows towards $0.1$; and nearly achieves nominal
coverage of the real line at the most difficult $c = 0.1$ case.

\subsection{Related Work}

We now mention some related work.

Our framework unifies ideas from the operations research and economics literature. Two closely related frameworks are \cite{bertsimas2022distributionally} and \cite{ChristensenAndConnault}, both of which are limited to discrete covariates. In the discrete covariate case, relative to \citeauthor{bertsimas2022distributionally}, we propose justifying distributional distances in terms of underlying structural models of economic failure and propose constructing target distributions that apply in settings beyond average treatment effects, and offer a specific analysis of $L_\infty$ bounds rather than other distributional distances. Relative to \citeauthor{ChristensenAndConnault} in the discrete covariate case, we analyze a nominally different class of identification failures and provide closed-form bounds for specifically linear estimators.

Our work is related to the recent literature on sensitivity analysis for IPW estimators, which relates to our first application. A sensitivity analysis is an approach to partial identification that begins from assumptions that point-identify the causal estimand of interest and then considers increasing relaxations of those assumptions \citep{MolinariChapter}. Our analysis is an extension of \citet{dorn2022sharp}'s sharp characterization of bounds under \citet{tan2006distributional}'s marginal sensitivity model. \citet{tan2022modelassisted} and \citet{frauen2023sharp} previously extended this characterization to families that bound the Radon-Nikodym derivative of interest. We generalize these results to also include unbounded Radon-Nikodym derivative, so that we can include a compact characterization of bounds under \citet{masten2018identification}'s conditional c-dependence model as a special case. There is rich work in this literature under other sensitivity assumptions like $f$-divergences and Total Variation distance. These other assumptions also fit within our framework, because our target distribution constructions are independent of the $L_\infty$ sensitivity assumptions that we analyze.

Our other two applications relate to existing work on sensitivity analysis. Our proposed sensitivity analysis for sharp regression discontinuity (RD) applies when data on the running variable fails tests for manipulation \citep{mccrary2008manipulation, otsu2013estimation, bugni2021testing}. Our proposal nests both an exogeneity-type assumption and \citet{gerard2020bounds}'s bounds as special cases. There is other work in the RD context on partial identification bounds under manipulation \citep{rosenman2019optimized, ishihara2020manipulation} but to our knowledge, our proposal is the first sensitivity analysis for manipulation. There are sensitivity analysis for exclusion failure with instrumental variables \citep{ramsahai2012causal, van2018beyond, masten2021salvaging, freidling2022sensitivity}, but to our knowledge our proposal is the first sensitivity analysis whose underlying assumptions are invariant to invertible transformations of variables.

\textbf{Notation}. We use $\bar{\R}$ to refer to the extended real number line $\R \cup \{ -\infty, \infty \}$. For a real-valued random variable $Z$ and a distribution $\Q$, we use the notation $E_{\Q}[Z] = \int z d \Q(z)$ and we write that the expected value of $Z$ exists under $\Q$ if $E_{\Q}[|Z|]$ is finite or $E_{\Q}[Z]$ is well-defined as exactly one of positive or negative infinity. We write that the expected value of $Z$ exists (without specifying the distribution) if the expected value exists under the observed distribution $\PObs$, where $\PObs$ is defined below. Similarly, we sometimes suppress the dependence of expectations when referring to the expectation under the observed distribution $\PObs$.  We write $\{ a, b \}_+ = \max\{a, b\}$ and $\{a, b\}_- = \min\{a, b\}$, and abuse notation by writing $\{a\}_+ = \max\{a, 0\}$. For random variables $Y \in \R^1$ and $R \in \R^d$ and a function $t: \R^d \to [0, 1]$, we refer to ``the" conditional quantile function $Q_{t(R)}(Y \mid R)$, which is any minimizer  of $E_{\PObs}[ t(R) \{ Y - Q(R), 0 \}_+ - (1-t(R)) \{ Y - Q(R), 0\}_- ]$ in functions $Q$ from the domain of $R$ to the extended real line $\R \cup \{-\infty, \infty\}$. To consolidate notation,  when $b$ is infinite, we evaluate the interval $[a, b]$ as $[a, \infty)$. If our estimation procedure calls for an estimate of a nuisance function $f$ that depends on another nuisance function $g$, we use the notation $\hat{f}$ to denote the full estimated nuisance function, which may include a composition of nuisance estimates. We use $1\{ . \}$ to refer to the indicator function that takes value 1 if true and 0 otherwise.

\section{Framework for Sensitivity Analysis}\label{sec:Framework}

In the following section, we illustrate our generic framework for sensitivity analysis: we propose linking a structural model of identification failures to an implied statistical distribution bound and then conducting partial identification under the distributional bound. We illustrate our approach with a family of $L_\infty$ distributional distances that is especially tractable for estimation. We illustrate how our framework applies to specific structural models on \Cref{sec:applications}.

\subsection{General Setting}

We begin with our general partial identification setting. We use the average potential outcome application as a running example.

There is a target distribution $\PTarget$ over $(R, Y)$ that point identifies a causal estimand of interest, where $Y \in \R^1$ is the outcome of interest and $R \in \R^d$ are observable quantities that do not include $Y$. We call $R$ the regressors. However, the researcher only has access to data from an observed distribution $\PObs$. $\PObs$ is also a distribution over $(R, Y)$, but $\PObs$ may not be able to point-identify the causal estimand under the researcher's preferred estimator, for example if $R$ is a selected treatment or if $R$ includes an instrument that fails the exclusion restriction.
\begin{manualassumption}{Support}\label{assum:ObservableDistProperties}
    The marginal distribution of $R$ is the same under $\PTarget$ and $\PObs$. The distribution $\PTarget$ is absolutely continuous with respect to $\PObs$. 
\end{manualassumption}
\Cref{assum:ObservableDistProperties} has two components. The first component is that $\PObs$ identifies the target distribution of regressors $R$, but may have a different conditional distribution of outcomes $Y$. The second component is an absolute continuity assumption that ensures the existence of a Radon-Nikodym derivative $\frac{d \PTarget}{d \PObs}$. The assumption accommodates unbounded Radon-Nikodym derivatives, for example $Y \mid R \sim Unif(0, 1)$ under $\PObs$ and $f_{Y}(y \mid R) = 1\{y \in (0, 1]\} y^{-1/2} / 2$ under $\PTarget$. The absolute continuity assumption is stronger than necessary for our results, and could be reduced to assuming that the support of $(Y, R)$ under $\PTarget$ is a subset of the support under $\PObs$. Absolute continuity rules out $\PTarget$ possessing mass points where $\PObs$ lacks mass points, so that such distributions can only correspond to limits of distributions within \Cref{assum:ObservableDistProperties}.

Our framework is inspired by the literature on sensitivity analysis for average treatment effects. In this case, there is a distribution $\PTrue$ over $(X, T, Y(1), Y(0))$, but we only observe the distribution $\PObs$ over $(X, T, Y = Y(T))$. The regressors are $R = (X, T)$. For simplicity, we illustrate our approach for the Average Potential Outcome (APO) $E_{\PTrue}[ E_{\PTrue}[ Y(1) \mid R] ]$. The researcher can accurately estimate $\PTrue( T \mid X )$, but cannot observe $\PTrue( Y(1) \mid X, T = 0)$. As a result, our proposed approach targets a distribution $\PTarget$ that first samples $(X, T) \sim \PTrue$ and then samples $Y \mid X, T$ from the distribution of $Y(T) \mid X$ under $\PTrue$. If a causal estimator like an inverse propensity weighted (IPW) estimator were applied to $\PTarget$ instead of $\PObs$, the researcher would identify the APO under $\PTrue$. Our later examples illustrate the applicability of this framework for other causal estimands.

We refer to our target as the causal estimand. We restrict ourselves to estimands that correspond to linear estimators. 
\begin{definition}\label{def:TargetEstimand}
    The \emph{causal estimand} is $\psi_0 = E_{\PTarget}[ \lambda(R) Y ]$ for some real-valued function $\lambda$. The \emph{primary estimand} is $E_{\PObs}[ \lambda(R) Y]$. 
\end{definition}
We generally assume that $\lambda$ is identified and the researcher has conducted a primary analysis that consistently estimates $\lambda$. For example, in the APO case, the $\lambda$ function is the inverse propensity $\frac{T}{\PObs(T = 1 \mid X)}$.  When the researcher is able to make strong enough assumptions that $\PTarget$ is equivalent to $\PObs$, then the primary estimand will be equal to the causal estimand. 
When the researcher is only able to bound the difference between $\PTarget$ and $\PObs$, then the researcher can partially identify the causal estimand through a restriction on the Radon-Nikodym derivative between the two distributions.
\begin{lemma}\label{lemma:ReweightingHolds}
    Suppose \Cref{assum:ObservableDistProperties} holds and $\lambda(R) Y$ is integrable under $\PTarget$. Then $$\psi_0 = E_{\PObs} \left[ \lambda(R) Y \frac{\d \PTarget( Y \mid R) }{d \PObs(Y \mid R)} \right].$$
\end{lemma}

\begin{proof}
    \Cref{lemma:ReweightingHolds} is a standard result for Importance Sampling and an immediate property of the Radon-Nikodym derivative. All other proofs can be found in \Cref{sec:Proofs}. 
\end{proof} 

This reweighing characterization is useful for partial identification because any non-negative putative outcome reweighing $\bar{W}$ that satisfies $E[ \bar{W} \mid R] = 1$ almost surely will correspond to a well-defined distribution $\Q$ over $(R, Y)$.

The Radon-Nikodym derivative characterization in \Cref{lemma:ReweightingHolds} often maps to interpretable quantities. For example, in the APO case, the Radon-Nikodym derivative maps to odds ratios as follows:
\begin{align*}
    \lambda(R) \frac{d \PTarget(Y \mid R)}{d \PObs(Y \mid R)} & = \lambda(R) \left( \PObs(T = 1 \mid X) + \PObs(T = 0 \mid X) \frac{\PTarget(T = 0 \mid X, Y(1))}{\PTarget(T = 1 \mid X, Y(1))} \frac{\PObs(T = 1 \mid X)}{\PObs(T = 0 \mid X)} \right).
\end{align*}
As a result, there is a bijection between structural models of treatment selection and causal models of treatment effects. More generally, principled restrictions on an underlying treatment selection model may imply, or be equivalent to, restrictions on the Radon-Nikodym derivative.

\subsection{Partial Identification Assumption and Result}

In the this subsection, we characterize the sharp bounds on the identified set under an $L_\infty$ restriction on the largest and smallest Radon-Nikodym derivative.
\begin{definition}[Sensitivity assumption]\label{assum:SensitivityModel}
    For any pair of functions $\underline{w}, \bar{w}: \R^d \to \bar{\R}$ satisfying $0 \leq \underline{w}(R) \leq 1 \leq \bar{w}(R)$ almost surely, we define $\mathcal{M}( \underline{w}, \bar{w} )$ as the set of distributions $\Q$ over $(R, Y)$ satisfying $\lambda(R) \frac{d \Q( R, Y )}{d \PObs( R, Y)} \in [\lambda(R) \underline{w}(R), \lambda(R) \bar{w}(R)]$ almost surely. 
\end{definition}
This family has several advantages. The restrictions on $\frac{d \Q}{d \PObs}$ decouple across values of $R$, enabling tractable characterizations of the identified set. The family nests both strong observational assumptions and Manski-type bounds as limits. Point identification corresponds to the case $\underline{w}(R) = \bar{w}(R) = 1$ almost surely. Manski-type bounds that only restrict the support of $Y$ correspond to $\bar{w}(R) = \infty$ with domain-appropriate $\underline{w}(R)$. In between, the causal estimand is only point identified. When the outcome $Y$ is binary, the restriction can equivalently be viewed as a restriction on the conditional mean of $Y \mid R$. We show below that, as in the \cite{tan2006distributional} model that inspired this generalization, the resulting bounds are highly tractable for estimating sharp and valid bounds.

We adapt standard notation from \cite{HoAndRosen}. 
\begin{definition}\label{def:IdentifiedSet}
    The \emph{identified set}  is $\mathcal{I}(\underline{w}, \bar{w} ) = \{ E_{\Q}[ \lambda(R) Y  ] \mid \Q \in \mathcal{M}( \underline{w}, \bar{w} ) \}$. The \emph{sharp bounds} on the identified set are $\psi^-(\underline{w}, \bar{w} ) = \inf_{\psi \in \mathcal{I}(\underline{w}, \bar{w} )} \psi$ and $\psi^+(\underline{w}, \bar{w} ) =  \sup_{\psi \in \mathcal{I}(\underline{w}, \bar{w} )} \psi$. A \emph{valid} identified set is a weak superset of $\mathcal{I}( \underline{w}, \bar{w} )$. We call bounds that are strictly weaker than the sharp bounds, \emph{conservative}.  
\end{definition}
We abuse notation and simply write $\mathcal{I}$, $\psi^-$, and $\psi^+$ as shorthand for the identified set and bounds under a generic model family. \Cref{def:IdentifiedSet} corresponds to the statistical partial identification bounds implied by an underlying structural model. As we illustrate in our applications, a structural model underlying \Cref{assum:SensitivityModel} can sometimes, but not always, imply a narrower bound.

We will require some nuisance functions to characterize the identified set. In particular:
\begin{definition}
    The threshold probability is $\tau(R) = \frac{\bar{w}(R) - 1}{\bar{w}(R) - \underline{w}(R)}$. The threshold quantiles are $Q^+(R) \equiv Q_{\tau(R)}(\lambda(R) Y \mid R)$ and $Q^-(R) \equiv Q_{1-\tau(R)}(\lambda(R) Y \mid R)$. The likelihood shifting term is $a(\underline{w}, \bar{w}, s) = (\bar{w} - \underline{w}) 1\{ s > 0 \} - (1-\underline{w})$. For a function $\bar{Q}: \R^d \to \bar{\R}$, define the pseudo-outcome $$\phi^+(\underline{w}, \bar{w}, r, y \mid \bar{Q}) \equiv \lambda(r) y + ( \lambda(r) y - \bar{Q}(r)) a( \bar{w}(r), \underline{w}(r), \lambda(r) y - \bar{Q}(r)),$$ and define $\phi^-(\underline{w}, \bar{w}, r, y \mid \bar{Q})$ analogously. Define the indicator variables $F = 1\{ \bar{w}(R) \text{ finite}\}$, $G^+ = 1\{ \lambda(R) Y - Q^+(R) > 0 \}$, $G^- = 1\{ \lambda(R) Y - Q^-(R) < 0 \}$, and $H = 1\{ \tau(R) < 1 \}$. 
    
\end{definition}
When $\bar{w}(R)$ is finite, the formula for $\tau(R)$ can be found in \citet{tan2022modelassisted} and \citet{frauen2023sharp}.

We will make additional regularity assumptions.
\begin{manualassumption}{Moments}\label{assum:QuantileMoments} 
    The expected values of $F |Q^+(R)|$, $F |Q^-(R)|$, $|\phi^+(\underline{w}, \bar{w}, R, Y \mid Q^+)|$ and $|\phi^-(\underline{w}, \bar{w}, R, Y \mid Q^-)|$ exist.   
\end{manualassumption}
We allow infinite $E_{\PObs}[ Q^+(R) ]$ for completeness with unbounded outcomes. While $\bar{w}$ can be infinite, we rule out heavy tails that would yield infinite $E_{\PObs}[ 1\{ \bar{w}(R) \text{ finite}\} Q^+(R) ]$.

Our work focuses on the upper bound of the identified set. 
\begin{theorem}\label{thm:IDSetUB}
    Suppose Assumptions \ref{assum:ObservableDistProperties} and \ref{assum:QuantileMoments} hold. Then the sharp upper bound on the identified set satisfies:
    \begin{align}
        \psi^+( \underline{w}, \bar{w} ) & = E_{\PObs}\left[ \lambda(R) Y + ( \lambda(R) Y - Q^+(R) ) a( \bar{w}(R), \underline{w}(R), \lambda(R) Y - Q^+(R) ) \right]. \label{eq:UBFormula}
    \end{align}
    The lower bound $\psi^-( \underline{w}, \bar{w} ) $ follows symmetrically, i.e.,
    \begin{align}
        \psi^-( \underline{w}, \bar{w} ) & = E_{\PObs}\left[ \lambda(R) Y + ( \lambda(R) Y - Q^-(R) ) a( \bar{w}(R), \underline{w}(R), Q^-(R) - \lambda(R) Y ) \right]. \label{eq:LBFormula}
    \end{align} Further, the identified set is convex. 
\end{theorem}
The theorem states that the sharp upper bound can be written as a closed-form moment: after the constituent components such as $\lambda(.), Q^+(.),$ and $a(.)$ have been estimated, no further optimization problem needs to be solved.

A proof sketch is in order. We abuse notation to write $\psi^+ = E_{\PObs}[\psi^+(R)]$, where $\psi^+(r)$ is a pointwise upper bound function satisfying:
\begin{align*}
    \psi^+(R) & = \sup_{W}  E_{\PObs}\left[ W \lambda(R) Y \mid R \right] \quad \text{s.t.} \quad W \in [\underline{w}(R), \bar{w}(R)] \quad \& \quad E_{\PObs}[W \mid R] = 1,
\end{align*}
and $\phi^+(R) \equiv E_{\PObs}[ \phi^+(\underline{w}, \bar{w}, R, Y \mid Q^+) \mid R ]$. The claim is $E_{\PObs}[\psi^+(R)] = E_{\PObs}[ \phi^+(R) ]$. We prove the stronger claim that  $\psi^+(R) = \phi^+(R)$ almost surely. For simplicity, this sketch will ignore the possibility that $\lambda(R) Y = Q^+(R)$ with positive probability and drop ``almost sure" caveats. As \cite{dvds} note, $\psi^+(R)$ can be mapped to a simple DRO problem over $(W - \underline{w}(R)) / (1 - \underline{w}(R)) \in [0, (\bar{w}(R) - \underline{w}(R)) / (1-\underline{w}(R))]$. The solution is $\psi^+(R) = E_{\PObs}[ \underline{w}(R) \lambda(R) Y + (1-\underline{w}(R)) CVaR_{\tau(R)}^+(R) \mid R]$, where $CVaR^+_{\tau(R)} =  E\left[ Q^+ + \frac{\{Y - Q^+\}_+}{1-\tau(R)} \mid R \right]$ is the level-$\tau(R)$ conditional value at risk of $\lambda(R) Y$. We split on the event $\tau(R) < 1$. By previous work \citep{tan2022modelassisted, frauen2023sharp}, $\tau(R) < 1$ implies $\psi^+(R) = \phi^+(R)$. We extend these results for the case $\tau(R) = 1$. On that event, $\phi^+(R)$ evaluates to $E[ \underline{w}(R) \lambda(R) Y + (1 - \underline{w}(R)) Q^+(R) \mid R ]$ and $Q^+(R) = CVaR_{1}^+(R)$. As a result, $\psi^+(R) = \phi^+(R)$ even if $\bar{w}(R)$ is infinite with positive probability.

One could characterize the partial identification bounds using the conditional value at risk directly, but the reweighing characterization \Cref{eq:UBFormula} is valuable for sensitivity analysis. We present the worst-case Radon-Nikodym derivative here for convenience.
\begin{lemma}\label{cor:WorstCaseWeights}
Suppose $\bar{w}(R)$ is bounded. Then one can construct a distribution $\Q^+ \in \mathcal{M}( \underline{w}, \bar{w} )$ and a random variable satisfying $\gamma(R) \in [\underline{w}(R), \bar{w}(R)]$ almost surely such that $\psi^+ = E_{\Q^+}[ \lambda(R) Y]$ and:
\begin{equation} \label{eqn:Wstar_sup}
    \frac{d \Q^+(R, Y)}{d \PObs(R, Y)} = W^{*}_{sup} =\begin{cases}
    \bar{w}(R) & \text{if } \quad \lambda(R)Y > Q^+(R) \\
    \underline{w}(R) & \text{if } \quad \lambda(R)Y < Q^+(R) \\
    \gamma(R) & \text{if } \quad \lambda(R)Y = Q^+(R). 
\end{cases}
\end{equation}
\end{lemma}

A researcher with a primary estimate of $\lambda(R)$ and who is capable of quantile regression can easily construct a plug-in estimate of $\psi^+$. Even if the researcher fails to consistently estimate the additional nuisance parameter $Q^+$, they can still easily estimate valid bounds:
\begin{lemma}\label{cor:Robustness}
    Suppose Assumptions \ref{assum:ObservableDistProperties} and \ref{assum:QuantileMoments} hold and there is a putative quantile function $\bar{Q}$ such that the expectation of $|\phi^+(\underline{w}, \bar{w}, r, y \mid \bar{Q})|$ exists. Then replacing $Q^+$ with the potentially incorrect quantile function $\bar{Q}$ yields valid bounds:
    \begin{align*}
        \psi^+( \underline{w}, \bar{w} ) & \leq E_{\PObs}\left[ \lambda(R) Y + ( \lambda(R) Y - \bar{Q}(R) ) a( \bar{w}(R), \underline{w}(R), \lambda(R) Y - \bar{Q}(R) ) \right]. 
    \end{align*}
    An analogous result holds for the lower bound $\psi^-$.
\end{lemma}
The argument follows by \cite{tan2022modelassisted}'s argument writing $(\lambda(R) Y - \bar{Q}(R)) a( \hdots )$ in terms of the Quantile Regression check function. $Q^+(R)$ is the minimizer by classic arguments.

Note that a similar recipe as in this section extends beyond the family we consider. For other assumptions, the model family in \Cref{assum:SensitivityModel} would change, the sharp upper bound in \Cref{thm:IDSetUB} would change and may lack a closed form, and the validity \Cref{cor:Robustness} may or may not hold, but the fundamental logic would carry through.

\section{Applications} \label{sec:applications}

In this section, we illustrate how to apply our framework to several potential failures of identifying assumptions: APOs with selection on unobservables, RD with manipulation, and IV without exclusion. We also discuss some other applications of our framework. We show that the sharp statistical bounds under our approach contain all restrictions implied by the underlying structural model for APO selection and RD manipulation, but not IV exclusion failure.

\subsection{Average Potential Outcomes} \label{sec:ipw}

In this section, we illustrate our framework in a case that generalizes existing results: bounds on average potential outcomes under restrictions on selection on unobservables. Many of our results are restatements of recent work, but we include an extension of partial identification bounds to unbounded Radon-Nikodym derivatives.


We assume there is a distribution $\PTrue$ over $(X, T, \{Y(t)\}, U)$, where $X$ are controls, $T$ is a discrete treatment, $Y(t)$ is the potential outcome corresponding to treatment level $t$, and $U$ are potential unobserved confounders. We only observe the coarsened distribution $\PObs$ over $(X, T, Y = Y(T))$, where $Y$ is the observed outcome. We write $I_t = 1\{T = t\}$.

We tailor our application to inverse propensity weighting (IPW) estimation of the APO. (We consider average treatment effects at the end of the section.) We write the observable propensity $e(X) = \PObs(T = 1 \mid X)$, where we assume $e(X) \in (0, 1)$ almost surely. The primary estimate is the IPW estimator that first estimates the  $e(X)$ and then estimates the APO as $E_{\PObs}[ \lambda(R) Y ]$, where $\lambda(R) = I_1 / e(X)$. When unconfoundedness holds given the observed covariates so that $I_t \indep Y(t) \mid X$, the IPW strategy consistently estimates the causal estimand $E_{\PTrue}[Y(1)] $. However, we will only assume unconfoundedness holds if the researcher had access to both the observed and unobserved covariates: $I_t \indep Y(t) \mid X, U$.\footnote{This is without loss of generality by setting $U = \{ Y(t) \}$.}

The target distribution $\PTarget$ is defined as
$\PTarget(X, T, Y) \equiv \PObs(X, T) \PTrue(Y \mid X, T)$. 
The distribution of $R$ is the same under $\PTarget$ and $\PObs$ and the distribution $\PTarget$ satisfies $E_{\PTrue}[Y(1)] = E_{\PTarget}[\lambda(R) Y]$, so that $E_{\PTarget}[\lambda(R) Y]$ is the causal estimand. Notice that $\PTarget$ and $\PObs$ may have different distributions of $Y \mid X$: the equivalent construction of $\PObs$ could be factored as $\PObs(X, T, Y) = \PObs(X, T) \PObs(Y \mid X, T)$. 

Many researchers study models that imply 
there are functions $\ell(X), u(X)$ such that $$\ell(X) \leq \frac{e(X, U) / (1-e(X, U))}{e(X) / (1-e(X))} \leq u(X)$$ almost surely \citep{Manski1990, tan2006distributional, AronowLeeInterpretable, masten2018identification, zhao2019sensitivity, dvds, tan2023sensitivity, frauen2023sharp}. This model implies pointwise restrictions on $\PTrue(T = 1 \mid X, Y(1))$ and the Radon-Nikodym derivative $\frac{d \PTarget(Y \mid X, T=1)}{d \PObs(Y \mid X, T=1)}$. In particular, the selection assumption would imply the following almost sure bound:
\begin{align*}
    \lambda(R) \frac{d \PTarget(Y \mid R)}{d \PObs(Y \mid R)} \in \left[ \lambda(R) ( e(X) + (1-e(X)) u(X)^{-1}), \qquad \lambda(R) ( e(X) + (1-e(X)) \ell(X)^{-1}) \right]. 
\end{align*}
Unconfoundedness corresponds to the special case $\ell(X) = u(X) = 1$. Manski bounds correspond to the special case $\ell(X) = 0$, $u(X) = \infty$. \cite{tan2006distributional}'s Marginal Sensitivity Model corresponds to $\underline{w}(X, t) = \PObs(T = t \mid X) + \PObs(T \neq t \mid X) \Lambda^{-1}$ and $\bar{w}(X, t) = \PObs(T = t \mid X) + \PObs(T \neq t \mid X) \Lambda$. \cite{basit2023riskratiobased}'s risk ratio marginal sensitivity model, that restricts $e(X, U) \in \left[ \Gamma_0^{-1} e(X), \Gamma_1^{-1} e(X) \right]$, corresponds to $\underline{w}(R) = \Gamma_1$ and $\bar{w}(R) = \Gamma_0$ when $e(X) \leq \Gamma_1$. \citeauthor{masten2018identification}'s conditional c-dependence model, that restricts $e(X,U) \in \left[ e(X) - c, e(X) + c\right] \cap [0, 1]$, corresponds to $\underline{w}(X, t) = \PObs(T = t \mid X) + \PObs(T \neq t \mid X) \frac{\max\{0, 1-(e(X)+c)\} / \min\{ 1, e(X) + c \}}{(1-e(X)) / e(X)}$ and $\bar{w}(X, t) = \PObs(T = t \mid X) + \PObs(T \neq t \mid X) \frac{\min\{1, 1-(e(X)-c)\} / \max\{ 0, e(X) - c \}}{(1-e(X)) / e(X)}$. 

\Cref{thm:IDSetUB} yields valid bounds on the identified set with the following quantities:
\begin{proposition}\label{prop:IPW_Y1}
In the APO case, our method can be implemented on $E_{\PTrue}[Y(1)]$ with:
\begin{align*}
    \lambda(R) = \frac{I_1}{e(X)}, \quad \underline{w}(R) = e(X) + (1-e(X)) u(X)^{-1}, \text{ and } \bar{w}(R) = e(X) + (1-e(X)) \ell(X)^{-1}.
\end{align*}
\end{proposition}
The result in \Cref{prop:IPW_Y1} extends the analysis of \citet{tan2022modelassisted, frauen2023sharp} to allow $\ell(X)$ to be arbitrarily small or equal to zero. As a result, it includes \cite{masten2018identification}'s conditional c-dependence assumption so long as $\lambda(R) Y$ is integrable under the target distribution. The characterization of bounds under conditional c-dependence is formally simpler than \cite{masten2018identification}'s characterization: their characterization involves an integral over a transformation of the full quantile regression function. Our approach also yields estimates of valid bounds under only the IPW estimation assumptions. An interesting question for future work is whether \cite{MastenPoirierZhang}'s proposed estimator for conditional c-dependence possesses similar validity guarantees.

The result is intuitive. The $w$ bounds would be $1$ under the unconfoundedness assumption $\ell(X) = u(X) = 1$. As $e(X)$ gets closer to one, we see a greater share of the treated potential outcomes and the $w$ bounds grow closer to one.  
An analogous approach can be used to obtain $E[Y(0)] = E[Y(1-Z)/(1-e(X,Y(0)))]$.  

The resulting bounds are sharp for the underlying structural model for the APO and average treatment effect (ATE) estimands. The only restrictions on the distribution of $(Y, e(X, U)) \mid X, T=1$ implied by $\PObs$ are the requirements that $E_\PObs[ 1 / e(X, U) \mid X, T=1] = 1/e(X)$ almost surely. If we instead solved for the set of feasible $E_{\PObs}[ I_1 Y / \bar{E} ]$ subject to $\ell(X) \leq \frac{\bar{E} / (1-\bar{E})}{e(X) / (1-e(X))} \leq u(X)$ and $E_\PObs[ 1 / \bar{E} \mid X, T=1] = 1/e(X)$ almost surely, we would obtain a convex set with the same bounds. For the ATE with two potential treatment statuses, an extension of \cite{dorn2022sharp}'s sharpness argument shows sharpness continues to hold under pointwise restrictions on the true propensity.


\subsection{Regression Discontinuity}\label{sec:RDManipulation}

In this section, we introduce a novel sensitivity analysis for sharp RD designs. Our framework allows us to bound standard causal estimands that previous work could not meaningfully quantify.

We work under a slight modification of \cite{gerard2020bounds}'s model. There is a full distribution $\PTrue$ over $(M, X(1), X(0), Y(1), Y(0), T)$, where $M$ is a variable corresponding to manipulation status, $X(m)$ is a potential running variable corresponding to manipulation status $M = m$, $Y(t)$ is a potential outcome corresponding to treatment status $T = t$, and $T \in \{0, 1\}$ is the treatment status. We face the fundamental problem of causal inference and only observe the coarsening $\PObs$ over $(X = X(M), Y = Y(T), T)$.

We study sharp RD designs. We assume there is a cutoff $c$ such that $\PTrue(T = 1 \mid X > c) = 1$, $\PTrue(T = 1 \mid X < c) = 0$. This  mimics assumption (RD) of \citet{hahn2001identification}.\footnote{\cite{gerard2020bounds} extend their approach to fuzzy RD, where there are complex shape restrictions.} It is observationally testable and often obvious in applications: if $X$ is the net reported vote share of an election candidate, then the candidate wins if and only if $X > 0$. We use the notation ``$X = c$" to mean that we take the limit as $\varepsilon \to 0$ of $X \in [c - \varepsilon, c + \varepsilon]$. Similarly, we let $X = c^+$ denote the limit corresponding to $X \in (c,c+\varepsilon]$ and let $X=c^-$ correspond to $X \in [c-\varepsilon,c)$.

We assume that observations can be partitioned into manipulators and non-manipulators. We assume manipulation is one-sided, and without loss of generality assume manipulators choose treatment ($F_{X \mid M=1}(c) = 0$). An interesting direction for future work are bounds in which manipulation can occur in both directions.  As in \citeauthor{gerard2020bounds}, the probability of treated observations being manipulated, $$\ProbManip = \PTrue( M  = 1 \mid X = c^+ ),$$ is identified.\footnote{\citeauthor{gerard2020bounds} call this quantity $\tau$.}  We assume appropriate continuity of potential outcomes and running variables given the manipulation status. We relegate the details to \Cref{assum:RD} in the appendix.
The assumptions imply that non-manipulator average treatment effects would be identified by the change in non-manipulator outcomes across the cutoff $c$. However, when $\ProbManip > 0$ so that there is manipulation, the distribution of $Y \mid X = c^+$ includes manipulators' treated potential outcomes, so that standard treatment effects are not point-identified.

Estimands that we may be interested in include the conditional average treatment effect (CATE), the conditional local average treatment effect (CLATE), and the conditional average treatment effect on the treated (CATT).
\begin{equation} \label{eq:RD_estimands}
\begin{split}
    \psi^{CATE} & \equiv E\left[ Y(1) - Y(0) \mid X = c \right] \\
    \psi^{CLATE} & \equiv E\left[ Y(1) - Y(0) \mid X = c, M = 0 \right] \\
    \psi^{CATT} & \equiv E\left[ Y(1) - Y(0) \mid X = c^+ \right]
\end{split}
\end{equation}

We call $E\left[ Y(1) - Y(0) \mid X = c, M = 0 \right]$ the CLATE because it is an average treatment effect at the cutoff among the population for whom the treatment is randomly assigned at the cutoff, conditional on being at the cutoff. 

We write the causal estimands of interest in terms of conditional expectations of \textcolor{black}{observed variables} and \textcolor{black}{potential outcomes} as follows: 
\begin{lemma}\label{lemma:EstimandExpressions}
    Our main estimands of interest have the following expressions:
    \begin{align*}
    \psi_{CATE}& = \frac{1}{2 - \textcolor{black}{\ProbManip}} \textcolor{black}{E\left[ Y \mid X = c^+ \right]} + \frac{1-\textcolor{black}{\ProbManip}}{1-2\textcolor{black}{\ProbManip}} \textcolor{black}{\E_{\PTrue} \left[ Y(1) \mid X = c, M = 0 \right]} \\
    & \qquad - \frac{\textcolor{black}{\ProbManip}}{2-\textcolor{black}{\ProbManip}} \textcolor{black}{\E_{\PTrue} \left[ Y(0) \mid X = c, M = 1 \right]} - \left(1-\frac{\textcolor{black}{\ProbManip}}{2-\textcolor{black}{\ProbManip}}\right) \textcolor{black}{E\left[ Y \mid X = c^- \right]} \\
    \psi_{CATT} & = \frac{\textcolor{black}{E\left[ (2 T - 1) Y \mid X = c \right]} - \frac{\textcolor{black}{\ProbManip}}{2-\textcolor{black}{\ProbManip}} \textcolor{black}{\E_{\PTrue}[Y(0) \mid X = c, M = 1]}}{\textcolor{black}{\PObs( X = c^+ \mid X = c)}} \\
    \psi_{CLATE} & = \textcolor{black}{\E_{\PTrue}[Y(1) \mid X = c, M = 0]} - \textcolor{black}{E[Y \mid X = c^-]}. 
    \end{align*} 
\end{lemma}

Most quantities in the expressions above are point-identified. The only unidentified objects are the conditional expectations \textcolor{black}{$\E_{\PTrue} \left[ Y(1) \mid X = c, M = 0 \right]$} and \textcolor{black}{$\E_{\PTrue} \left[ Y(0) \mid X = c, M = 1 \right]$}.

The specific target distribution $\PTarget$ will depend on the estimand of interest as follows. For sets $\mathcal{Y} \subset \R$, define:\footnote{A formal construction would define the appropriate distributions at $X=x$ based on the distribution conditional on $|x-c|$ and then take the limit as $x \to c$ from either side, which would yield well-defined distributions by the maintained continuity assumptions.}
\begin{align*}
    \PTarget_{CLATE}(\mathcal{Y} \mid X=c, T=t) & = t \PTrue( Y(1) \in \mathcal{Y} \mid X=c, M=0) + (1-t) \PObs( Y \in \mathcal{Y} \mid X = c^-) \\ 
    \PTarget_{CATT}(\mathcal{Y} \mid X=c, T=t) & = t \PObs(Y \in \mathcal{Y} \mid X=c^+) + (1-t) \PTrue( Y(0) \in \mathcal{Y} \mid X=c^-) \\
    \PTarget_{CATE}(\mathcal{Y} \mid X=c, T=t) & = \PTrue( Y(t) \in \mathcal{Y} \mid X=c).
\end{align*}
Then the target distribution for estimated $Est \in \{CATE, CLATE, CATT\}$ is defined as $\PTarget(X, T, Y) \equiv \PObs(X, T) \PTarget_{Est}(Y \mid X, T)$. 
The distribution of $R$ is the same under $\PTarget$ and $\PObs$ and $E_{\PTarget}[\lambda(R) Y]$ is equal to the target estimand for $\lambda(R)$ we define below.

Suppose we have a structural model that implies there are values $\Lambda_0, \Lambda_1 \geq 1$ such that for $t = 1, 0$:
\begin{align}
    \left. \frac{\PTrue( M = 1 \mid Y(t), X = c)}{\PTrue( M = 0 \mid Y(t), X = c)} \right/ \frac{\PObs( M = 1 \mid X = c)}{\PObs( M = 0 \mid X = c)} \in [\Lambda_t^{-1}, \Lambda_t]. \label{eq:RDManipulationBounds}
\end{align}
Define $\mathcal{M}(\Lambda_0, \Lambda_1)$ as the set of distributions $\Q$ over $(M, X(1), X(0), Y(1), Y(0), T)$ that marginalize to the distribution of $(X(M), Y(T), T = 1\{ X(M) > c\})$ under $\PTrue$, satisfy the restrictions of this section, and satisfy $\frac{\Q(M = 1 \mid Y(t), X=c)}{\Q(M = 1 \mid Y(t), X=c)} / \frac{\PObs(M = 1 \mid X = c)}{\PObs(M = 0 \mid X = c)} \in [\Lambda_t^{-1}, \Lambda_t]$ almost surely.

More complicated functional forms could be accomodated at the cost of additional notation. This functional form is inspired by \cite{tan2006distributional}'s Marginal Sensitivity Model. $\Lambda_t = 1$ corresponds to an unconfoundedness-type case in which the difference in regression values yields the CATE. $\Lambda_t = \infty$ corresponds to \cite{gerard2020bounds}'s assumption, in which the distribution of $Y(0) \mid M = 1, X=c$ is only constrained by the support of the potential outcome. In between, larger values of $\Lambda$ accommodate larger degrees of manipulation on potential outcomes.

\begin{proposition} \label{prop:RDEstimands}
In the RD case, our method can be implemented for the CLATE $E_{\PTrue}[ Y(1) \mid X = c, M = 0] - E_{\PObs}[ Y \mid X = c^-]$ with the following values: 
\begin{align*}
    \lambda(R) & = \frac{1\{ X = c^+ \}}{\PObs(X = c^+ \mid X = c)} - \frac{1\{ X = c^- \}}{\PObs(X = c^- \mid X = c)} \\
    \underline{w}(R) & = \frac{1\{ X = c^+ \}}{1 - \ProbManip + \ProbManip \Lambda_1} + 1\{ X = c^- \}, \qquad \bar{w}(R) = \frac{1\{ X = c^+ \}}{1 - \ProbManip + \ProbManip \Lambda_1^{-1}} + 1\{ X = c^- \};
\end{align*}
our method can be implemented for the CATT $E_{\PTrue}[Y(1) - Y(0) \mid X = c^+]$ with the same $\lambda(R)$ and: 
\begin{align*}
    \underline{w}(R) & = 1\{ X = c^+ \} + 1\{ X = c^- \} \left( (1-\ProbManip) + \ProbManip \Lambda_0^{-1} \right), \qquad \bar{w}(R) = 1\{ X = c^+ \} + 1\{ X = c^- \} \left( (1-\ProbManip) + \ProbManip \Lambda_0 \right);
\end{align*}
and our method can be implemented  for the CATE $E_{\PTrue}[Y(1) - Y(0) \mid X = c]$ with the same $\lambda(R)$ and: 
\begin{align*}
    \underline{w} (R) &= 1\{ X=c^+ \} \left[ \frac{1}{2-\ProbManip} + \frac{1-\ProbManip}{1-2\ProbManip} \frac{1}{1-\ProbManip+\ProbManip\Lambda_1} \right] + 1\{ X=c^- \} \left[ \left( 1 - \frac{\ProbManip}{2-\ProbManip}\right)  + \frac{\ProbManip}{2-\ProbManip} \Lambda_0^{-1} \right] \\
    \bar{w} (R) &= 1\{ X=c^+ \} \left[ \frac{1}{2-\ProbManip} + \frac{1-\ProbManip}{1-2\ProbManip} \frac{1}{1-\ProbManip+\ProbManip\Lambda_1^{-1}} \right] + 1\{ X=c^- \} \left[ \left( 1 - \frac{\ProbManip}{2-\ProbManip}\right) + \frac{\ProbManip}{2-\ProbManip} \Lambda_0 \right]. 
\end{align*}
\end{proposition}
The connection between \Cref{prop:RDEstimands} and the characterization from \Cref{lemma:EstimandExpressions} comes from correspondences that we derive in Appendix \Cref{lemma:ImpliedLikelihoodRatios}. Note that even when $\Lambda_1$ is infinite, the CLATE bounds are finite, which allows \citet{gerard2020bounds} to obtain non-trivial CLATE bounds. Our CLATE bounds for $\Lambda_1=\infty$ are identical to the bounds in \citet{gerard2020bounds} for sharp RD. When $\Lambda_1$ is finite, we are also able to obtain meaningful CATE bounds.

Our analysis so far bounds manipulation on each potential outcome separately. There turns out to be no additional information available from a structural model that bounds manipulation on both potential outcomes simultaneously. 
\begin{proposition}\label{prop:RDSharpness}
    Suppose there is a finite $\Lambda \geq 1$ such that $\Lambda_1 = \Lambda_0 = \Lambda$. Let $\mathcal{M}'(\Lambda)$ be the set of distributions $\Q \in \mathcal{M}(\infty, \infty)$ satisfying: 
    \begin{align}
        \left. \frac{\Q( M = 1 \mid Y(1), Y(0), X = c)}{\Q( M = 0 \mid Y(1), Y(0), X = c)} \right/ \frac{\PObs( M = 1 \mid X = c)}{\PObs( M = 0 \mid X = c)} \in [\Lambda^{-1}, \Lambda]. 
        \label{eq:JointPOSelection}
    \end{align}
    Then $\mathcal{M}'(\Lambda) \subseteq \mathcal{M}(\Lambda, \Lambda)$. Further, take any distributions $\Q_1, \Q_0 \in \mathcal{M}(\Lambda, \Lambda)$ and write $\psi_t = E_{\Q_t}[Y(t) \mid X = c, M = 1-t]$. Then there is a distribution $\Q' \in \mathcal{M}'(\Lambda)$ satisfying $E_{\Q'}[Y(t) \mid X = c, M = 1-t] = \psi_t$ for $d = 1, 0$. 
\end{proposition}
\Cref{prop:RDSharpness} shows that the bounds are sharp, i.e., separate bounds on both potential outcomes suffice to bound our causal estimands of interest. The quantities $\E_{\PTrue}[ Y(t) \mid X = c, M = 1-t]$ still identify our causal estimands of interest.

The high-level logic of \Cref{prop:RDSharpness} is fundamentally similar to Section \ref{sec:ipw}. Any given pair of Radon-Nikodym derivatives may not be simultaneously achievable for both potential outcomes. However, the pair corresponding to the worst-case bounds are simultaneously achievable. As a result, the identified set is the same whether we bound manipulation on one potential outcome or both potential outcomes simultaneously.

\subsection{Instrumental Variables}\label{sec:IVExclusion}

We consider bounds on local average treatment effects (LATEs) in instrumental variable models under restrictions on violations of the exclusion restriction. Our previous examples correspond to forms of selection on unobservables. We now show our framework also applies to bounds on local average treatment effects (LATEs) in instrumental variable models under restrictions on violations of the exclusion restriction.

Consider a canonical IV setup. We assume there is a distribution $\PTrue$ over $\left(Z,\{T\left(z\right)\},\left\{ Y\left(t,z\right)\right\} ,X\right)$, where $Z$ is a binary instrument, $Y(t, z)$ is the potential outcome with treatment status $t$ and instrument status $z$, and $T(z)$ is the potential treatment given the instrument $z$ is $T(z)$. However, we only observe the coarsening $\Pobs$ over $\left(Z,T=T\left(z\right),Y=Y\left(T,Z\right),X\right)$. We will maintain that the instrument is randomly assigned ($Y(t, z) \indep Z \mid X$) and monotonicity holds ($T(1) \geq T(0)$). Under monotonicty, there are three treatment response groups: always-takers (At, $T(1) = T(0) = 1$), never-takers (Nt, $T(1) = T(0) = 0$), and compliers (Co, $1 = T(1) > T(0) = 0$). The regressors are $R = (T, Z, X)$.

We will not impose the exclusion restriction. If exclusion holds, then $Y\left(t,1\right)=Y\left(t,0\right)$. If exclusion fails, then the standard conditional IV estimand, $(E[Y \mid Z=1, X] - E[Y \mid Z=0, X]) / (E[T \mid Z=1, X] - E[T \mid Z=0, X])$, will incorrectly assign any direct effect of the instrument on outcomes to treatment effects.

We target an instrument-weighted average LATE given by: 
\begin{align*}
\psi & : =E_{\PTrue} \left[ \eta(X) 1\{Co\} \sum_z \omega(z \mid X) \left(Y(1,z)-Y(0,z)\right)  \right] 
\end{align*}
where $\eta(X)$ reflects covariate weighting and $\omega(1 \mid X) = 1 - \omega(0 \mid X)$ is a researcher-estimated instrument weighting function in $[0, 1]$. When exclusion holds, this class includes the average complier treatment effect for $\eta(X) = 1 / \PObs(Co)$. When exclusion fails, this specification also allows the researcher to target a particular weighted average of treatment effects across instrument statuses. For example, $\omega(Z \mid X) = 1$ targets the treatment effect at $Z = 1$ and $\omega(Z \mid X) = E[Z \mid X]$ targets the treatment effect at the average observed instrument status.

We rewrite the causal estimand $\psi$ in terms of conditional expected outcomes as follows:
\begin{align*}
    \psi = E_{\PTrue}\left[ \lambda(R) E_{\PTrue}\left[ \sum_z  \omega(z \mid X) Y(T, z) \mid X, T(1), T(0) \right] \right],
\end{align*}
where $\lambda(Z, X, T) = \eta(X) \left( \frac{Z - \PObs(Z = 1 \mid X)}{\PObs(Z = 1 \mid X) \PObs(Z = 0 \mid X)} \right)$. Always-takers and never-takers are included in this statement of $\psi$ for convenience, but their potential outcomes cancel out for $\lambda(R)$ because $E[\lambda(R) \mid T(1), T(0), X] = 0$. 

The target distribution $\PTarget$ is constructed as a marginal distribution. For $v \in \{0, 1\}$ and $\mathcal{Y} \subset \R$, define the distribution $\PTarget$, which re-draws a value $v$ from a Bernoulli($\omega(1 \mid X)$) distribution in order to achieve appropriate outcome weighting, as follows: $$\PTarget(X, T, Z, Y) \equiv \sum_{v, t_1, t_0 \in \{0, 1\}} \PTrue(X, t, Z, T(1) = t_1, T(0) = t_0) \omega(v \mid X) \PTrue( Y(T, v) \in \mathcal{Y} \mid X, T(1), T(0)).$$ 
The distribution of $R$ is the same under $\PTarget$ and $\PObs$. Further, by \citet{abadie2003semiparametric}'s argument, the distribution satisfies $E_{\PTarget}[\lambda(R) Y] = E_{\PTrue}[ \lambda(R) E_{\PTrue}[ Y(T, Z) \mid X, Co] ] = \psi$, so that $E_{\PTarget}[\lambda(R) Y]$ is the causal estimand.

Suppose we have a structural model that implies there are functions $\ell_{Nt}(X), u_{Nt}(X)$, $\ell_{At}(X), u_{At}(X)$, $\ell_{Co}^{1}(X), u_{Co}^{1}(X)$, $\ell_{Co}^0(X), u_{Co}^0(X)$ such that:
\begin{align*}
    \ell_{Nt}(X) & \leq \frac{d \PTrue( Y(0, 0) \mid Nt, X)}{d \PTrue( Y(0, 1) \mid Nt, X)} \leq u_{Nt}(X), \qquad \ell_{At}(X) \leq \frac{d \PTrue( Y(1, 1) \mid At, X)}{d \PTrue( Y(1, 0) \mid At, X)} \leq u_{At}(X), \\
    \ell_{Co}^1(X) & \leq \frac{d \PTrue( Y(1, 0) \mid Co, X)}{d \PTrue( Y(1, 1) \mid Co, X)} \leq u_{Co}^1(X), \qquad \ell_{Co}^0(X) \leq \frac{d \PTrue( Y(0, 1) \mid Co, X)}{d \PTrue( Y(0, 0) \mid Co, X)} \leq u_{Co}^0(X),
\end{align*}
and suppose further that all four Radon-Nikodym derivatives are finite and strictly positive. Exclusion corresponds to the case $\ell = u = 1$. Worst-case bounds that only restrict the support of the potential outcomes correspond to $\ell = 0$, $u = \infty$. When $Y$ is binary, \citet{ramsahai2012causal} proposes a sensitivity analysis that places a bound on $\PTrue(Y=1\mid X, T, Z=1, U) - \PTrue(Y=1\mid X, T, Z=0, U)$ for some unobserved $U$, which immediately translates to bounds on always- and never-taker likelihood ratios by taking $U = \{ T(1), T(0) \}$ and implies bounds on complier likelihood ratios. This object can be decomposed as a convolution of several of the probability objects above, so its interpretation is less transparent in a causal framework with potential treatments. Alternatively, $\Lambda$ bounds on the effect of $Z$ on the odds of a binary potential outcome given $X$ and potential treatments imply $\Lambda$ bounds on the odds ratios here, though narrower partially identified regions could be obtained by leveraging estimated regression functions.

As we show in Appendix \Cref{lem:excl_bound_transform}, these objects further imply bounds on $\frac{d \PTrue( Y(T, Z) \mid Co, X)}{d \PObs( Y \mid T, Z, X)}$ for each value of $T$ and $Z$.

\begin{proposition}\label{prop:ExclusionBounds}
    Suppose we write the implied bounds from the worst-case $\ell$ and $u$ applied to the formulas from Appendix \Cref{lem:excl_bound_transform} as 
    \begin{align*}
    \underline{w} (t,z|X) &\leq \frac{d \PTarget( Y \mid X, T=t, Z=z)}{d \PObs( Y \mid X, T=t, Z=z)} \leq \bar{w}( t,z|X),
    \end{align*}
    where we write $\underline{w}(t,z|X) = \bar{w}(t,z|X) = 1$ for values such that $\PObs(T=t, Z=z \mid X) = 0$. Then our method can be implemented on $E_{\PTrue}[ \eta(X)  1\{Co\} \sum_z \omega(z \mid X) ( Y(1, z) - Y(0, z) )]$
    with the following values: 
    \begin{align*}
        \lambda(R) & = \eta(X) \frac{Z - \PObs(Z = 1 \mid X)}{\PObs(Z = 1 \mid X) \PObs(Z = 0 \mid X)}, \quad
        \underline{w}(R) = \underline{w}(T, Z \mid X), \quad \bar{w}(R) = \bar{w}(T, Z \mid X). 
    \end{align*}
\end{proposition}

Unlike the previous examples, the characterization in \Cref{prop:ExclusionBounds} is conservative. 
\begin{proposition}\label{prop:ConservativeBounds}
    Suppose the observed distribution follows $X = 1$; $Z \mid X \sim Bern(0.5)$; $T \mid Z, X \sim Bern( Z / 2 )$; and $Y \mid X, Z, T \sim Unif(-1, 1)$. Suppose we are interested in the average complier treatment effect at $z=1$, i.e. $\eta(x) = 2$ and $\omega(z \mid x) = z$. Suppose a structural model implies lower bounds of $\ell(x) = 1$ and $u(x) = \infty$ for all groups. Then the structural model implies the sharp bounds are the singleton $\{0\}$, but the bounds from \Cref{prop:ExclusionBounds} are $[-1, 1]$. 
\end{proposition}
Intuitively, the structural model implicitly includes cross-restrictions between the $Co$, $At$, and $Nt$ Radon-Nikodym derivatives. In this case, the $\frac{d \PTarget(Y \mid T=0, Z=0)}{d \PObs( Y \mid T=0, Z=0)}$ includes a product of Radon-Nikodym derivatives for compliers and never-takers, which must satisfy further constraints that our approach omits for the purpose of ease of use.

We discuss some other applications in \Cref{sec:OtherApplications}.

\section{Implementation}

We now provide theoretical guarantees for consistency of a plug-in estimator. We then illustrate the effective performance of our procedure for simulations in the conditional c-dependence environment.

\subsection{Estimation and Inference}\label{sec:EstimationAndInference}

This subsection shows that natural plug-in estimators achieve standard asymptotics under reasonable conditions. We focus on the upper bound for exposition. The target object is:
\[
T \equiv E_{\PObs}\left[\lambda(R)Y+\left(\lambda(R)Y-Q_{\tau(R)}\left(\lambda(R)Y|R\right)\right)a(\underline{w}(R),\bar{w}(R),\lambda(R)Y,Q_{\tau(R)}(\lambda(R)Y\mid R))\right],
\]
where
\[
a(\underline{w},\bar{w},\lambda y,q)\equiv\left(\bar{w}-\underline{w}\right)1\left\{ \lambda y>q\right\} -(1-\underline{w}).
\]
This is a statistical quantity that depends on the observed distribution $\PObs$ alone, so we now suppress the dependence on $\PObs$ for concision.

We use hatted objects to denote the estimated objects and $\hat{E}$ to denote the sample mean.
\[
\hat{T}\equiv \hat{E}\left[\hat{\lambda}(R)Y+\left(\hat{\lambda}(R)Y-\hat{Q}_{\hat{\tau}(R)}\left(\hat{\lambda}(R)Y|R\right)\right)a(\hat{\underline{w}}(R),\hat{\bar{w}}(R),\hat{\lambda}(R)Y,\hat{Q}_{\hat{\tau}(R)}(\hat{\lambda}(R)Y\mid R))\right]
\]


Consistency follows under natural conditions. Further, we have a form of one-sided robustness. 
\begin{proposition} \label{prop:consistency}
If we have iid sampling, finite second moments, and $\hat{\lambda}\xrightarrow{p}\lambda, \hat{Q}\xrightarrow{p}Q, \hat{\bar{w}} \xrightarrow{p} \bar{w}, \hat{\underline{w}} \xrightarrow{p} \underline{w}$, then $\hat{T} \xrightarrow{p} T$. Further, even if $\hat{Q} \xrightarrow{p} \bar{Q} \ne Q$,  there exists some $\hat{T}^*$ such that $\hat{T} \geq \hat{T}^*$ and $\hat{T}^* \xrightarrow{p} T$.
\end{proposition}

The first part of the proposition is immediate by applying the continuous mapping theorem and the law of large numbers for iid observations. The $a(\cdot)$ dependence on an indicator function only introduces a kink point rather than a discontinuity, so the function is still continuous. The assumptions of the proposition are made at a high level, so that we can accommodate various forms of consistent estimators for functions such as $\hat{\bar{w}}$. In particular, we can accommodate machine learning nuisance estimators, as we do in our simulation. Note that our consistency assumption may not be achievable when the quantiles are finite but unbounded, as at one point in our simulation.

The second part of the proposition states a useful robustness guarantee. Even if the quantile is not estimated correctly, the resulting estimated bounds will be valid: too wide for the estimated sensitivity model rather than too narrow.  This validity property corresponds to \Cref{cor:Robustness}'s one-sided validity guarantee.

For inference, we use a standard percentile bootstrap in our implementation. Namely,
\begin{enumerate}
    \item For every $b = 1, \cdots,B $,
    \begin{enumerate}
        \item Sample $n$ observations from the data iid with replacement to get the bootstrap data.
        \item Calculate $t^{(b)} \equiv \sqrt{n} (\hat{T}^{(b)} - \hat{T})$, where $\hat{T}^{(b)}$ is the estimator that uses the boostrapped data. 
    \end{enumerate}
    \item Let $G_N$ denote the CDF of $t^{(b)}$ and $q_\alpha$ denote that $\alpha$ quantile of $G_N$. For a size $1-\alpha$ confidence interval, use $[\hat{T} - \frac{1}{\sqrt{n}} q_{1-\alpha/2} , \hat{T} - \frac{1}{\sqrt{n}} q_{\alpha/2}] =CI(\alpha)$
\end{enumerate}

The bootstrap will have coverage at least as large as nominal under standard conditions, like smoothness of the $\hat{w}$ estimates, even if the quantile estimator tends to an inconsistent limit. The core argument is that an infeasible boostrap estimator that replaces the estimated $1 \{ \hat{\lambda}^{(b)} Y > \hat{Q}^+ \}$ with the true $1 \{ \lambda Y > Q^+ \}$ in the construction of $a$ would be valid and have weakly more aggressive confidence intervals. As a result, the quantiles do not even need to be reestimated in the bootstraps \citep{dorn2022sharp}. Further, estimation error in the quantiles exhibit a second-order influence on the estimated bounds, so that the confidence intervals can asymptotically achieve the nominal rate under moderate conditions \citep{dvds}.  A generic proof of bootstrap consistency is outside the scope of this paper. The presence of extreme quantiles or kink points, as in \cite{masten2018identification}'s conditional c-dependence model, may call for more exotic bootstraps and a subtle proof of bootstrap validity.


\subsection{Simulation}
We illustrate the procedure using a simulation in the conditional c-dependence context of Section \ref{sec:ipw}.

The observed distribution $\PObs$ over $(X, Z, Y$) is $X \sim U[-\eta, \eta]$; $Z \mid X \sim Bern( 1 / (1 + exp(-X)) )$; and $Y \mid X, Z \sim \mathcal{N}((2 + X)(Z - 1), 1)$, where $\eta$ is chosen so that the support of $e(X)$ is $[0.1, 0.9]$.

We consider estimation at $c = 0, 0.01, ..., 0.1$. When $c > 0.1$, the identified set is unbounded. Conversely, we show in Appendix \Cref{cor:CDependenceIDdSet} that for all $c \in [0, 0.1)$, the identified set remains uniformly bounded.  Estimation for small $c$ reduces to a fairly standard problem. As $c$ approaches $0.1$, estimation becomes more difficult and the relevant extremal quantiles introduce a ``delicate problem" for estimation \citep{MastenPoirierZhang}. Once $c$ crosses $0.1$, the identified set becomes the real line and \Cref{lemma:ReweightingHolds} no longer applies. Still, we evaluate simulation performance at this extreme case by testing how often the $c = 0.1$ confidence intervals yield the true infinite bounds. We call this just-barely-infinite case $c = 0.1 + \epsilon$.

The bounds in the general program are estimated by plug-in. We estimate $\lambda$, $\tau$, $\underline{w}$, and $\bar{w}$ by plugging in propensity estimates $\hat{e}(X)$. The propensities are estimated using well-specified logistic regression of $Z$ on $X$. We estimate the quantile function using quantile regression on 99 grid points ($\tau = 0.01, ..., 0.99$) and estimate the extreme quantiles ($\tau = 0, 1$) from the known infinite bounds on the conditional outcome distribution.  The quantile regression grid is estimated by regressing $\hat{\lambda} Y$ on $Z$ interacted with both $X$ and $\hat{\lambda}(X)$ and is re-used across values of $c$ and bootstraps.  The underlying quantile regression is estimated in two ways: linear quantile regression and random forests with three-fold cross validation. For each observation, we take the quantile regression estimate by linearly interpolating the grid at the estimated $\hat{\tau}$.

Inference proceeds by a standard percentile bootstrap. For a given dataset, we can redraw observations with replacement. For every bootstrap draw $b$, we reestimate propensities via one-step updating, and reestimate the weight bounds accordingly. We then estimate bootstrap upper and lower bounds $\hat{\psi}^+_{b}$ and $\hat{\psi}^-_{b}$. We do not reestimate the quantile regression grid between bootstraps, which introduces a squared outward bias. The 95\% confidence interval for the identified set is the set bounded by the 2.5th quantile of the $\hat{\psi}^-_{b}$ draws and the 97.5th quantile of the $\hat{\psi}^+_{b}$ draws. The 95\% confidence intervals for the one-sided lower and upper bounds is the set bounded by the 5th quantile of the $\hat{\psi}^-_{b}$ draws or the 95th quantile of the $\hat{\psi}^+_{b}$ draws. The quantiles of the estimated bounds $\{ \psi^-_b\}, \{ \psi^+_b\}$ then form the confidence interval for the identified set. 

We compare our estimates and confidence intervals to the true bounds. We obtain a close estimate of the true bounds numerically. In particular, we average the closed-form identified set for $E[Y(1) - Y(0) \mid X]$ over one million draws of $X$. With the true $\psi^+$ and $\psi^-$ essentially known, we can assess the validity of our estimation and inference procedures. 

For a given sensitivity parameter $c$, we run 1,000 simulations of the data with 2,000 observations. Within each simulation, we take 1,000 bootstrap draws and calculate the bounds for each bootstrap draw. We also evaluate the coverage rate of the $c = 0.1$ estimates for slightly larger $c$, in which case the true identified set is the reals. We denote the resulting coverage rates in tables as $c = 0.1 + \epsilon$.

\begin{figure}[p]
    \centering
    \includegraphics[width=.9\textwidth]{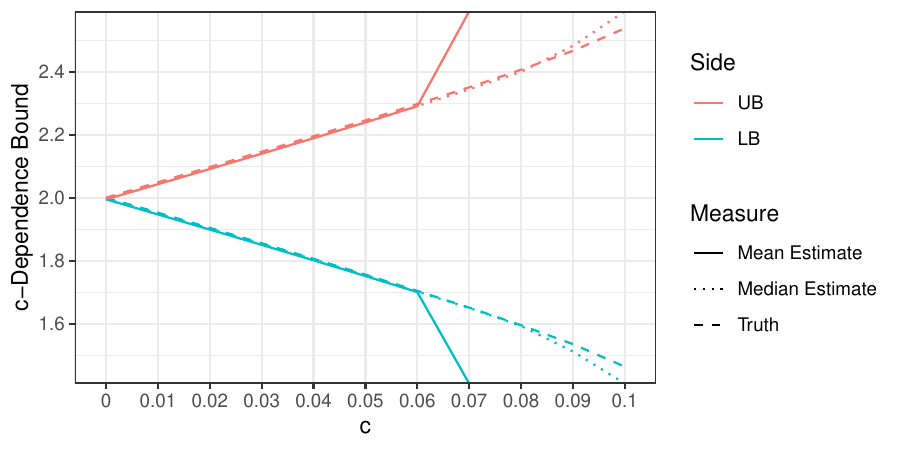}
    \caption{Bounds using linear quantile regression. Average bound estimates (solid line), median bound estimates (dotted line), and true bounds (dashed line) in our 1,000 simulations. Our estimates are generally close to median unbiased. Once \protect\input{singleNumbers/min_infinite_bound_linear}, 
    some simulations have infinite estimated identified sets.}
    \label{fig:mean_estimates}
\end{figure}

\begin{figure}[p]
    \centering
    \includegraphics[width=.9\textwidth]{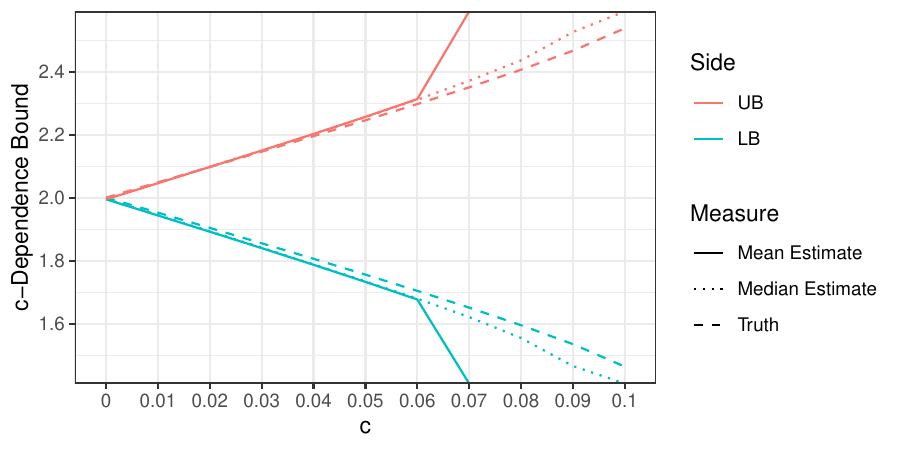}
    \caption{Bounds using random forest. Average bound estimates (solid line), median bound estimates (dotted line), and true bounds (dashed line) in our 1,000 simulations. Our estimates are generally close to median unbiased. Once \protect\input{singleNumbers/min_infinite_bound_forest}, 
    some simulations  have infinite estimated identified sets.}
    \label{fig:mean_estimates_grf}
\end{figure}

We present our mean and median bound estimates using linear and random forest quantiles in Figures \ref{fig:mean_estimates} and \ref{fig:mean_estimates_grf}, respectively. In Figure \ref{fig:mean_estimates}, our median bound estimates generally track the true bounds. Our mean bound estimates roughly track the identified set until $c$ gets close enough to $0.1$ that some simulations produce infinite estimated bounds (\protect0.2%
\% of simulations at \protect$c \geq 0.07$%
). As $c$ gets close to $0.1$ and the most extreme $\tau(R)$ values get close to one, our median estimates become slightly too wide. This phenomenon likely reflects the robustness of our characterization with respect to quantile errors, which are especially likely when applying our discrete grid to extreme $\bar{w}$ values. In the random forest estimates represented in Figure \ref{fig:mean_estimates_grf}, the estimates are somewhat more conservative, but still track the true bounds reasonably closely. Since we have a linear parametric model, we would expect the parametric quantile regression to be more accurate than a non-parametric method and to produce narrower bounds.

\begin{table}[!ht]
    \centering
    \caption{Coverage for 95\% confidence set using linear quantile regression (left) and forest quantile regression (right)} \label{tab:sim}
    
\begin{tabular}{lrrrrrr}
\toprule
\multicolumn{1}{c}{} & \multicolumn{6}{c}{CI Coverage (Target 95\%)} \\
\multicolumn{1}{c}{} & \multicolumn{3}{c}{Linear RQ} & \multicolumn{3}{c}{Forest RQ} \\
\cmidrule(l{3pt}r{3pt}){2-4} \cmidrule(l{3pt}r{3pt}){5-7}
\multicolumn{1}{c}{c} & Set & LB & UB & Set & LB & UB\\
\midrule
0 & 94.2 & 96.6 & 92.4 &  94.2 & 96.6 & 92.4\\
0.01 & 94.0 & 96.7 & 92.6 &  94.6 & 96.8 & 93.1\\
0.02 & 94.3 & 96.6 & 92.8 &  95.2 & 97.1 & 93.8\\
0.03 & 94.4 & 96.7 & 92.8 &  96.0 & 97.3 & 94.4\\
0.04 & 94.5 & 96.7 & 92.9 &  96.7 & 97.5 & 95.3\\
\addlinespace
0.05 & 94.5 & 96.7 & 92.9 &  97.7 & 97.6 & 96.7\\
0.06 & 94.7 & 96.9 & 93.5 &  98.1 & 97.7 & 97.4\\
0.07 & 95.9 & 96.7 & 94.7 &  98.7 & 98.1 & 98.7\\
0.08 & 97.5 & 96.5 & 96.5 &  99.2 & 98.5 & 99.1\\
0.09 & 98.1 & 96.7 & 97.4 &  99.5 & 99.1 & 99.5\\
\addlinespace
0.1 & 98.7 & 97.7 & 98.7 &  99.8 & 99.5 & 99.8\\
$0.1 + \epsilon$ & 77.3 & 74.0 & 84.1 &  81.9 & 77.3 & 93.3\\
\bottomrule
\end{tabular}

    \flushleft CI coverage denotes the percentage of simulations in which the 95\% identified set confidence interval includes the true identified set and in which one-sided 95\% lower and upper bound confidence intervals contain the true bounds. 
\end{table}

Our coverage results are reported in \Cref{tab:sim}. Under nominal coverage, the coverage rate would be \protect93.6\% to 96.3\%%
 with 95\% probability. With linear quantile regression, the coverage rate is close to nominal for small and moderately sized values of $c$. There is some over-coverage as $c$ approaches $0.1$. When $c = 0.1$, the identified set rests on a knife edge: as $c$ approaches $0.1$ from below, the lower and upper bounds tend towards $1.5$ and $2.5$, but for any $c$ above $0.1$, the identified set is the full real line. We find that in this case the confidence intervals cover the true (finite) identified set in \protect98.7\%%
 of simulations and cover the near (infinite) identified set in \protect77.3\%%
 of cases of simulations. (At $c=0.10$, \protect66.4\%%
 of bound estimates are unbounded.) The bounds with quantile forest estimates remain valid, but tend to be more conservative. As before, this phenomenon can be attributed to well-specified parametric quantile regression being more accurate and generating narrower bounds that produce coverage closer to the nominal rates.

\section{Conclusion}

This paper proposes a novel sensitivity analysis framework for identification failures for linear estimators. By placing bounds on the distributional distance between the observed distribution and a target distribution that identifies the causal parameter of interest, we obtain sharp and tractable analytic bounds. This framework generalizes existing sensitivity models in RD and IPW and motivates a new sensitivity model for IV exclusion failures. We provide new results on sharp and valid sensitivity analysis that allow even unbounded likelihood ratios. We illustrate how our framework and partial identification results contribute to three important applications, including new procedures for sensitivity analysis for the CATE under RD with manipulation and for instrumental variables with exclusion. 

\bibliographystyle{ecta}
\bibliography{linear_est}

\appendix

\section{Additional Material}

\begin{manualassumption}{RD}\label{assum:RD} 
    We extend \citeauthor{gerard2020bounds}'s continuity condition and assume that there is a well-defined conditional distribution function $\PTrue( Y(1), Y(0) \mid X, M )$ such that  for each $y_1, y_0$, $\PTrue( Y(1) \leq y_1, Y(0) \leq y_0 \mid X = x, M = 0 )$ can be defined as continuous in $x$ and that the derivative of $\PTrue(X \leq x \mid M = 0)$ is continuous at $x = c$. We further assume that $\PTrue( Y(y) \leq y_1, Y(0) \leq y_0 \mid X = x, M = 1 )$ is right-continuous $x$, assume that $\PTrue( M = 1 \mid X = x)$ is right-continuous at $x = c$, assume that marginals $\PTrue( Y(t) \leq y_t \mid X=x, M=m)$ are right-continuous for $m=1$ (continuous for $m=0$) at $X=c$ and define conditional distributions at $X = c$, $X = c^+$, and $X = c^-$ using the appropriate limits in\Cref{def:FormalLinfCondProbs}.
\end{manualassumption}

\begin{definition}\label{def:FormalLinfCondProbs}
We define $\PTrue( Y(1) \leq y_1, Y(0) \leq y_0, T(0) = t, M = m \mid X = c) \equiv \PTrue( X = c^+ \mid X 
 = c) \lim_{x \to c^+} \PTrue( Y(1) \leq y_1, Y(0) \leq y_0 \mid X = x, M = m) / 2$, where $\PTrue(X = c^+ \mid X = c) = \lim_{\varepsilon \to 0^+} \PObs( T = 1 \mid |X - c| \leq \varepsilon)$, $T(0)$ is the potential treatment corresponding to $X(0)$, and the conditional distribution function at $X = c, M = m$ exists by appropriate right-continuity. We refer to conditional probabilities on $\PTrue( \cdot \mid X = c, \cdot )$ that are  defined in terms of this conditional distribution. We refer to $\PTrue( \cdot \mid X = c^-, \cdot )$ and $\PTrue( \cdot \mid X = c^+, \cdot )$ where the random variables for $X = c^{\pm}$ are defined as $(X = c^-) \equiv X = c, M = 0, T(0) = 0$ and $(X = c^+) \equiv X = c, (X = c^-) = 0$. 
\end{definition}

The following claim shows that the conditional c-dependence identified set in our implementation example is finite and bounded for all $c \leq 0.1$ but is infinite for all $c > 0.1$.

\begin{proposition}\label{cor:CDependenceIDdSet}
   Suppose $Y \mid X, Z \sim \mathcal{N}(\mu(X, Z), \sigma(X, Z)^2)$, the support of the observed propensity function $e(X)$ is the closed interval $[\eta_1, 1-\eta_2] \subset (0, 1)$, and the conditional outcome variance $\sigma(X, Z)$ is positive and bounded. Then there is a finite $B > 0$ such that the ATE identified set is a subset of $[ E[\mu(X, 1) - \mu(X, 0)] - B, E[\mu(X, 1) - \mu(X, 0)] + B]$ for all $c < \min\{ \eta_1, \eta_2 \}$ but is $(-\infty, \infty)$ for all $c > \min\{ \eta_1, \eta_2 \}$.
\end{proposition}

\subsection{Some Other Applications}\label{sec:OtherApplications}

This subsection briefly mentions several other potential applications that we do not work out in detail.

Our framework also applies to instrumental variables that are not valid due to the failure of randomization in the binary instrument $Z$. If there is selection in $Z$, then the analysis proceeds very similarly to \Cref{sec:ipw}. 

Our procedure can be applied somewhat trivially to linear projections using Ordinary Least Squares (OLS). Consider a hypothetical linear model $Y=X\beta+u$. Suppose we are interested in a linear combination of coefficients $\delta' \beta$, where $X$ includes an intercept but $\delta$ puts no weight on the intercept term, so that without loss of generality we can assume $E_{\PTrue}[u] = 0$. However, suppose there may be endogeneity in the sense that $E[X u] \neq 0$. Such problems are considered in \citet{cinelli2020making}. If we targeted a distribution $\PTarget$ that first sampled $X \sim \PObs$, then drew $u \mid X$ from the distribution of $u$ under $\PTrue$, and then returned $Y = X \beta + u$, then the target distribution would  obtain the correct coefficients. The sensitivity assumption is then on $\frac{d\PTarget}{d\PObs}$. For instance, we may have $\underline{w} \leq \frac{d\PTarget}{d\PObs}\leq \bar{w}$. Using the notation of our general framework, $\lambda(R) = \delta' E[X' X]^{-1} X$. While the bounds are sharp, the sensitivity assumption need not be interpretable in general. 

When viewing difference-in-differences (DD) estimands as a special case of linear projections, our procedure also has the potential to speak to sensitivity analysis to the failure of parallel trends. There are existing ways to do sensitivity analysis to the failure of parallel trends. For example, \citet{rambachan2023more} places an explicit assumption on the extent that the slopes are not parallel. In contrast, our framework can use a bound on the likelihood ratio of potential outcomes between the treated and untreated; equivalently, this approach can use a bound on the degree of selection on potential outcomes or trends. \cite{bertsimas2022distributionally} propose similar distributional distance bounds without including the selection component. This approach is then invariant to units used in the regression.

\section{Proofs}\label{sec:Proofs}

\subsection{Lemmas}

\begin{lemma}\label{lemma:FubiniLemma}
    Suppose \Cref{assum:QuantileMoments} holds and the expected values are finite. Then the expected values of $F Q^+(R) \bar{w}(R) G^+$ and $F Q^-(R) \bar{w}(R) G^-$ exist, where $F \bar{w}(R)$ is evaluated as zero when $F = 0$.
\end{lemma}

\begin{proof}[Proof of \Cref{lemma:FubiniLemma}]
    Note that, for $G \in \{ G^+, G^- \}$, $0 \leq  E[\bar{w} G \mid R] \leq 1$ almost surely and $\bar{w} G$ is non-negative. The remaining result holds by Fubini's theorem. 
\end{proof}

\begin{lemma}\label{prop:IPWIdentifiedSetBound}
    \emph{Bounds on identified set}. Consider the general identification setting and suppose $Y \mid R \sim \mathcal{N}( \mu(R), \sigma(R)^2 )$. Then for all $\epsilon \in (0, 1)$, the identified set is a subset of $$\left[ E[\lambda(R) Y] \pm E\left[ \sigma(R) \lambda(R) (1-\underline{w}(R))  \left( \sqrt{ 2 \log( \bar{w}(R) ) } +  \sqrt{2/\pi} + (1 - \underline{w}(R))^{\epsilon} \sqrt{ 1 / (e * \epsilon) } \right)  \right] \right],$$ where $[a \pm b]$ denotes the closed interval $[a - b, a + b]$ and $\log$ is the natural logarithm. 
\end{lemma}

\begin{proof}[Proof of \Cref{prop:IPWIdentifiedSetBound}]
    We show that the upper bound is at most $E[ \lambda(R) \mu(R) ]$ plus one-half the proposed width; the lower bound follows symmetrically.

    Note that as we argue in \Cref{thm:IDSetUB}, the upper bound for the identified set can be written as:
    \begin{align*}
        \psi^+ & = E_{\PObs}\left[ \underline{w}(R) \lambda(R) Y + (1-\underline{w}(R)) CVaR_{\tau(R)}^+(R) \right].
    \end{align*}
    In the Normal-residual case, we can write $CVaR_{\tau(R)}^+(R) = \lambda(R) \mu(R) + \lambda(R) \sigma(R) \frac{\phi(q_{\tau(R)})}{1-\tau(R)},$
    where $q_\tau$ is the $\tau^{\text{th}}$ quantile of a standard normal distribution and $\phi$ is the standard normal CDF. 
    By existing arguments (e.g. \citet{pinelis2019exact}), the inverse Mills ratio $\phi(q) / (1-\Phi(q))$ has the upper bound $\sqrt{2/\pi} + q$. 
    
    Therefore the APO upper bound can be further bounded as:
    \begin{align*}
        \psi^+ & = E_{\PObs}\left[ \lambda(R) \mu(R) + (1-\underline{w}(R)) \sigma(R) \lambda(R) \frac{\phi(q_{\tau(R)})}{1-\tau(R)} \right] \\
        & \leq E[ \lambda(R) \mu(R)] + E \left[ (1 - \underline{w}(R)) \sigma(R) \lambda(R) \left( \sqrt{2/\pi} + q_{\tau(R)} \right) \right].
    \end{align*}
    
    It remains to bound $q_{\tau(R)}$.  By standard arguments, if $S \sim N(0, 1)$, then $P(S > s) \leq exp( - s^2 / 2)$. We substitute $s = q_{\tau(R)}$ to obtain:
    \begin{align*}
        1 - \tau(R) = P( S > q_{\tau(R)} ) & \leq exp( - q_{\tau(R)}^2 / 2) \\
        \log(1 - \tau(R)) & \leq -  q_{\tau(R)}^2 / 2 \\
        \sqrt{ \log\left( \frac{1}{1 - \tau(R))^2} \right) } & \geq q_{\tau(R)}.
    \end{align*}
    
    Therefore we have bounded the identified set as:
    \begin{align*}
        \psi^+ & \leq E \left[ \lambda(R) \mu(R) +  (1-\underline{w}(R)) \sigma(R) \lambda(R) \left( \sqrt{2/\pi} + \sqrt{ 2 (1 - \underline{w}(R))^2 \log\left( \frac{1}{1 - \tau(R))} \right) } \right) \right] .
    \end{align*}

    Now we bound the second square root, using the identity:
    \begin{align*}
        \frac{1}{1-\tau(R)} = \frac{\bar{w}(R) - \underline{w}(R)}{1 - \underline{w}(R)} & = 1 + \frac{\bar{w}(R) - 1}{1 - \underline{w}(R)}.
    \end{align*}
    Therefore:
    \begin{align*}
        2 (1 - \underline{w}(R))^2 \log\left( \frac{1}{1 - \tau(R))} \right) & = 2 (1 - \underline{w}(R))^2 \log( \bar{w}(R) - \underline{w}(R)) - 2 (1 - \underline{w}(R))^{2-2\epsilon} (1-\underline{w}(R))^{2 \epsilon} \log(1 - \underline{w}(R)) \\
        & \leq 2 (1 - \underline{w}(R))^2 \log( \bar{w}(R)  ) + \frac{(1 - \underline{w}(R))^{2-2 \epsilon} }{e * \epsilon}.
    \end{align*}
    So that we now have the bound:
    \begin{align*}
        \psi^+ & \leq E \left[ \lambda(R) \mu(R) \right] + E \left[ \sigma(R) \lambda(R) \left( (1-\underline{w}(R)) \sqrt{2/\pi} + \sqrt{ 2 (1-\underline{w}(R))^2 \log( \bar{w}(R) ) + \frac{(1 - \underline{w}(R))^{2-2 \epsilon} }{e * \epsilon} } \right) \right] \\
        & \leq E \left[ \lambda(R) \mu(R) \right] + E\left[ \sigma(R) \lambda(R) (1-\underline{w}(R))  \left( \sqrt{ 2 \log( \bar{w}(R) ) } +  \sqrt{2/\pi} + (1 - \underline{w}(R))^{\epsilon} \sqrt{ 1 / (e * \epsilon) } \right)  \right] .
    \end{align*}
    Applying the same argument to the symmetric lower bound completes the proof.
\end{proof}

\begin{lemma}\label{lemma:RewriteA}
    Recall the definitions $\tau(R) = \frac{\bar{w}(R) - 1}{\bar{w}(R) - \underline{w}(R)}$ and $a(\underline{w}, \bar{w}, s) = (\bar{w} - \underline{w}) 1\{ s > 0 \} - (1-\underline{w}).$ We may instead write:
    \begin{align*}
        a(\underline{w}, \bar{w}, s) & = (1-\underline{w}) \left( \frac{1\{ s > 0 \}}{1-\tau} - 1 \right).  
    \end{align*}
\end{lemma}

\begin{proof}[Proof of \Cref{lemma:RewriteA}]
    Simple algebra.
\end{proof}

\begin{lemma}\label{lemma:NonManipRandomization}
    Consider the Regression Discontinuity application in \Cref{sec:RDManipulation}. Suppose $F_{X \mid M = 0}(x)$ is differentiable in $x$ at $c$ with a positive derivative. Then $\PObs(X = c^+ \mid X = c, M = 0) = 1/2$. 
\end{lemma}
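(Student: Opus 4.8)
The plan is to unpack the definition of the event ``$X \approx c$'' as a limit and then recognize the resulting ratio as a quotient of one-sided difference quotients of the conditional CDF $F_{X \mid M = 0}$. Fixing $\varepsilon > 0$ and conditioning throughout on $M = 0$, I would first write
\[
\P\left(X > c \mid X \in [c - \varepsilon, c + \varepsilon], M = 0\right) = \frac{F_{X \mid M = 0}(c + \varepsilon) - F_{X \mid M = 0}(c)}{F_{X \mid M = 0}(c + \varepsilon) - F_{X \mid M = 0}(c - \varepsilon)},
\]
where I use that differentiability of $F_{X \mid M = 0}$ at $c$ implies continuity there, so there is no atom at the endpoints $c$ or $c \pm \varepsilon$ in the limit and the choice of open versus closed sub-intervals is immaterial. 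By the convention in the paper, $\P(X > c \mid X \approx c, M = 0)$ is the $\varepsilon \to 0$ limit of this expression.

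Next I would divide numerator and denominator by $\varepsilon$ to expose difference quotients. The numerator becomes $\bigl(F_{X \mid M = 0}(c + \varepsilon) - F_{X \mid M = 0}(c)\bigr)/\varepsilon$, which converges to the right derivative $f_{X \mid M = 0}(c)$. For the denominator I would split it as
\[
\frac{F_{X \mid M = 0}(c + \varepsilon) - F_{X \mid M = 0}(c)}{\varepsilon} + \frac{F_{X \mid M = 0}(c) - F_{X \mid M = 0}(c - \varepsilon)}{\varepsilon},
\]
so that the two terms converge to the right and left derivatives respectively. Here is precisely where the hypothesis of (two-sided) differentiability at $c$, rather than mere one-sided differentiability, does the work: it forces both one-sided derivatives to equal the common value $f_{X \mid M = 0}(c)$, so the denominator tends to $2 f_{X \mid M = 0}(c)$.

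Taking the limit then gives $f_{X \mid M = 0}(c) / \bigl(2 f_{X \mid M = 0}(c)\bigr) = 1/2$, which is the claim. The only genuine subtlety to guard against is the indeterminate $0/0$ form: both numerator and denominator vanish as $\varepsilon \to 0$, so the conclusion is only legitimate because the assumed derivative is strictly positive, which makes $2 f_{X \mid M = 0}(c) \neq 0$ and justifies passing to the ratio of limits. I expect this positivity bookkeeping, together with the careful reduction of the conditional event to increments of $F_{X \mid M = 0}$, to be the main point to state cleanly; the rest is a routine application of the definition of the derivative.
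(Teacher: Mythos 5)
Your proposal is correct and follows essentially the same route as the paper's proof: express the conditional probability as a ratio of CDF increments, split the denominator into left and right increments, divide through by $\varepsilon$, and use two-sided differentiability (with the positive derivative ensuring the limit of the denominator is nonzero) to conclude the ratio tends to $f_{X \mid M=0}(c) / \bigl(2 f_{X \mid M=0}(c)\bigr) = 1/2$. Your added remarks on continuity at the endpoints and the $0/0$ bookkeeping are sound refinements of details the paper leaves implicit.
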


\begin{proof} [Proof of \Cref{lemma:NonManipRandomization}]
Define $f_{x \mid M = 0}(c) > 0$ to be the derivative of $F_{X \mid M = 0}(x)$ at $c$. Then we have:
    \begin{align*}
        \PObs(X = c^+ \mid X = c, M = 0)&=\PTarget(X = c^+ \mid X = c, M = 0) \\
        & = \lim_{\varepsilon \to 0^+} \frac{F_{X \mid M = 0}(c + \varepsilon) - F_{X \mid M = 0}(c)}{F_{X \mid M = 0}(c + \varepsilon) - F_{X \mid M = 0}(c - \varepsilon)} \\
         & = \lim_{\varepsilon \to 0^+} \frac{F_{X \mid M = 0}(c + \varepsilon) - F_{X \mid M = 0}(c)}{F_{X \mid M = 0}(c + \varepsilon) - F_{X \mid M = 0}(c) + F_{X \mid M = 0}(c) - F_{X \mid M = 0}(c - \varepsilon)}  \\
         & = \lim_{\varepsilon \to 0^+} \frac{\frac{F_{X \mid M = 0}(c + \varepsilon) - F_{X \mid M = 0}(c)}{\varepsilon}}{\frac{F_{X \mid M = 0}(c + \varepsilon) - F_{X \mid M = 0}(c)}{\varepsilon} + \frac{F_{X \mid M = 0}(c) - F_{X \mid M = 0}(c - \varepsilon)}{\varepsilon}} \\
         & = \frac{\lim_{\varepsilon \to 0^+} \frac{F_{X \mid M = 0}(c + \varepsilon) - F_{X \mid M = 0}(c)}{\varepsilon}}{\lim_{\varepsilon \to 0^+} \frac{F_{X \mid M = 0}(c + \varepsilon) - F_{X \mid M = 0}(c)}{\varepsilon} + \lim_{\varepsilon \to 0^-} \frac{F_{X \mid M = 0}(c + \varepsilon) - F_{X \mid M = 0}(c)}{\varepsilon}} \\
         & = \frac{f_{x \mid M = 0}(c)}{2 f_{x \mid M = 0}(c)} = 1 / 2
    \end{align*}
\end{proof}

\begin{lemma}\label{lemma:ImpliedLikelihoodRatios}
    Consider the Regression Discontinuity application in \Cref{sec:RDManipulation}. Suppose $\Q \in \mathcal{M}(\infty, \infty)$. Further suppose that the distribution of $(Y(1), Y(0), T(0), M) \mid X = c$ under $\Q$ has associated manipulation selection functions $q_1(y_1) \equiv \Q(M = 1 | Y(1) = y_1, X = c)$ and $q_0(y_0) \equiv \Q(M = 1 | Y(0) = y_0, X = c)$. 
    Then the Radon-Nikodym derivatives are as follows:
    \begin{align*}
        \frac{d \Q(Y(1) \mid X = c, M = 0)}{d \PObs(Y \mid X = c^+)} & = \frac{1}{1-\ProbManip} \frac{1 - q_1(Y(1))}{1 + q_1(Y(1))}  \\
        \frac{d \Q(Y(0) \mid X = c, M = 1)}{d \PObs(Y \mid X = c^-)} & =  \frac{2(1-\ProbManip)}{\ProbManip} \frac{q_0(Y(0))}{1-q_0(Y(0))}.
    \end{align*} 
    As a result:
    \begin{align}
        E_{\PTrue}\left[ Y(1) \mid X=c, M=0 \right] & = \frac{1}{1-\ProbManip}E_{\PObs}\left[Y\frac{1-q_{1}(Y)}{1+q_{1}(Y)}\frac{1\left\{ X=c^{+}\right\} }{P\left(X=c^{+}\right)}\right] \label{eqn:Y1M0} \\
        E_{\PTrue}\left[Y(0)|X = c,M=1\right] & = \frac{2\left(1-\ProbManip\right)}{\ProbManip}E_{\PObs}\left[Y\frac{q_{0}(Y)}{1-q_{0}(Y)}\frac{1\left\{ X=c^{-}\right\} }{P\left(X=c^{-}\right)}\right]. \label{eqn:Y0M1}
    \end{align}
\end{lemma}

\begin{proof} [Proof of \Cref{lemma:ImpliedLikelihoodRatios}]
    For our proofs, it is useful to use $\ProbManip_0$ instead. We define:
    \begin{align*}
        \ProbManip_0 \equiv \PTarget( M=1 \mid X = c) = \PObs(X = c^+ \mid X = c) \ProbManip. 
    \end{align*}
    Notice that $\PObs(X = c^+ \mid X = c) = 1 / (2-\ProbManip)$, because $\PObs(X = c^- \mid X = c) = \PObs( M = 0 \mid X = c) / 2 = (1-\ProbManip) \PObs(X = c^+ \mid X = c)$. As a result, $\ProbManip_0 = \eta / (2-\eta)$, $\eta=2\eta_0/(1+\eta_0)$, and $\PObs(X=c^+ \mid X=c) = (1+\eta_0)/2$.

    Also notice that in Equation (\ref{eq:RDManipulationBounds}), the constraints on manipulation probabilities are defined relative to:
    \begin{equation} \label{eq:M1M0ratio}
        \frac{\PObs( M = 1 \mid X = c)}{\PObs( M = 0 \mid X = c)} = \frac{\ProbManip_0}{1-\ProbManip_0} = \frac{\ProbManip}{2(1-\ProbManip)}.
    \end{equation}

    We begin with $Y(1)$. Since $\Q$ produces the observable distribution $\PObs(X = c^+ \mid X = c) d \PObs( Y \mid X = c^+)$ and $\PObs(X=c^+ \mid X=c) = (1+\eta_0)/2$, we have:
    \begin{align*}
        \Q(X = c^+ \mid X = c) d \Q( Y(1) \mid X = c^+) & = d \Q(Y(1) \mid X = c) \Q(X = c^+ \mid X = c, Y(1)) \\
        & = d \Q(Y(1) \mid X = c) \left( 1+q_1(Y(1)) \right)/2 \\
        d \Q(Y(1) \mid X = c) & = \frac{2 \PObs(X = c^+ \mid X = c)}{1 + q_1(Y(1))} d \Q(Y(1) \mid X = c^+) 
    \end{align*} 

    We can also derive the probability of a treated observation being manipulated under $\Q$ through Bayes' Rule:
    \begin{align*}
        \Q(M = 0 \mid Y(1), X = c^+) & = \frac{d \Q(Y(1) \mid X = c) \Q(M = 0 \mid X = c, Y(1)) \Q(X = c^+ \mid X = c, Y(1), M = 0)}{\Q(X = c^+ \mid X = c) d \Q(Y(1) \mid X = c^+)} \\
        & = \frac{d \Q(Y(1) \mid X = c) * (1-q_1(Y(1))) / 2}{d \Q(Y(1) \mid X = c) \left( 1+q_1(Y(1)) \right)/2 } \\
        & = \frac{1 - q_1(Y(1))}{1 + q_1(Y(1))}
    \end{align*} 

    As a result:
    \begin{align*}
        d \Q(Y(1) \mid X = c, M = 0) & = d \Q(Y(1) \mid X = c^+, M = 0) \\
        & = \frac{d \Q(Y(1) \mid X = c^+) \Q(M = 0 \mid X = c^+, Y(1))}{\Q(M = 0 \mid X = c^+)} \\
        & = \frac{\frac{1 - q_1(Y(1))}{1 + q_1(Y(1))}}{1-\ProbManip} d \PTarget(Y(1) \mid X = c^+)  
    \end{align*}
    This is our first equality. 

    We now turn our attention to $Y(0)$. By a similar observed-untreated-outcome argument, we have:
    \begin{align*}
        & \Q(X = c^- \mid X = c) d \Q(Y(0) \mid X = c^-) \\
        = & d \Q(Y(0) \mid X = c) \Q(X = c^- \mid X = c, Y(0)) \\
        = & d \Q(Y(0) \mid X = c) \Q(M = 0 \mid X = c, Y(0)) \Q(X = c^- \mid X = c, Y(0), M = 0) \\
        = & d \Q(Y(0) \mid X = c) ( 1 - q_0(Y(0)) ) / 2 
    \end{align*} 

    Since $\Q(X = c^- \mid X = c) = \PTarget( X = c^- \mid X = c) = \frac{1-\ProbManip_0}{2}$, we can then obtain:
    \begin{align*}
        d \Q(Y(0) \mid X = c) & = \frac{1-\ProbManip_0}{1-q_0(Y(0))} d \Q(Y(0) \mid X = c^-) 
    \end{align*}

    We can also split up $d \Q(Y(0) \mid X = c)$ as:
    \begin{align*}
        d \Q(Y(0) \mid X = c) & = \ProbManip_0 d \Q(Y(0) \mid X = c, M = 1) + (1-\ProbManip_0) d \Q(Y(0) \mid X = c, M = 0) \\
         & = \ProbManip_0 d \Q(Y(0) \mid X = c, M = 1) + (1-\ProbManip_0) d \Q(Y(0) \mid X = c^- )
    \end{align*}
    So that we can combine terms to obtain:

    \begin{align*}
        d \Q(Y(0) \mid X = c, M = 1) & = \frac{1-\ProbManip_0}{\ProbManip_0} \frac{q_0(Y(0))}{1-q_0(Y(0))} d \Q(Y(0) \mid X = c^- ) \\
        \frac{1-\ProbManip_0}{\ProbManip_0} & = \frac{2(1-\ProbManip)}{\ProbManip} \\
        d \Q(Y(0) \mid X = c, M = 1) & = \frac{2(1-\ProbManip)}{\ProbManip} \frac{q_0(Y(0))}{1-q_0(Y(0))} d \Q(Y(0) \mid X = c^- ) 
    \end{align*} 
    Which is the final equality after substituting in  $d \Q(Y(0) \mid X = c^-) = d \PTarget(Y(0) \mid X = c^-)$.
\end{proof}

\begin{lemma} \label{lem:excl_bound_transform}
In the Instrumental Variables application in \Cref{sec:IVExclusion}, the following decompositions hold:
\small 
\begin{align*}
     & \frac{d\PTarget(Y|X,T=0,Z=0)}{d\PObs(Y|X,T=0,Z=0)}  =\omega(0 \mid X)+\omega(1 \mid X)  \frac{\PObs(Nt \mid X)}{\PObs(Co \mid X) + \PObs(Nt \mid X)} \frac{d\PObs(Y\mid X,T=0,Z=1)}{d\PObs(Y\mid X, T=0,Z=0)}\\
    & \quad+\omega(1 \mid X)\frac{d\PTrue(Y(0,1)|X,Co)}{d\PTrue(Y(0,0)|X,Co)}\left(1-
    \frac{\PObs(Nt \mid X)}{\PObs(Co \mid X) + \PObs(Nt \mid X)} \frac{d\PTrue(Y(0,0)|X,Nt)}{d\PTrue(Y(0,1)|X,Nt)}\frac{d\PObs(Y\mid X,T=0,Z=1)}{d\PObs(Y\mid X,T=0,Z=0)}\right) \\
     & \frac{d\PTarget(Y|X,T=1,Z=1)}{d\PObs(Y|X,T=1,Z=1)}  =\omega(1 \mid X)+\omega(0 \mid X)  \frac{\PObs(At \mid X)}{\PObs(Co \mid X) + \PObs(At \mid X)} \frac{d\PObs(Y\mid X,T=1,Z=0)}{d\PObs(Y\mid X,T=1,Z=1)}\\
    & \quad+\omega(0 \mid X)\frac{d\PTrue(Y(1,0)|X,Co)}{d\PTrue(Y(1,1)|X,Co)}\left(1 - 
    \frac{\PObs(At \mid X)}{\PObs(Co \mid X) + \PObs(At \mid X)}  \frac{d\PTrue(Y(1,1) \mid X, At)}{d\PTrue(Y(1,0) \mid X, At)}\frac{d\PObs(Y\mid X,T=1,Z=0)}{d\PObs(Y\mid X,T=1,Z=1)}\right) \\
    & \frac{d \PTarget(Y \mid X, T = 0, Z = 1)}{d \PObs(Y \mid X, T = 0, Z = 1)} = \omega(1 \mid X) + \omega(0 \mid X) \frac{d \PTrue( Y(0, 0) \mid X, Nt )}{d \PTrue( Y(0, 1) \mid X, Nt)} \\
    & \frac{d \PTarget(Y \mid X, T = 1, Z = 0)}{d \PObs(Y \mid X, T = 1, Z = 0)} = \omega(0 \mid X) + \omega(1 \mid X) \frac{d \PTrue( Y(1, 1) \mid X, At )}{d \PTrue( Y(1, 0) \mid X, At)},
\end{align*}
\normalsize 
where for values of $X$ with $\PObs(T, Z \mid X) = 0$, we write $\frac{d \PTarget(Y \mid X, T, Z)}{d \PObs(Y \mid X, T, Z)} = 1$. 
\end{lemma}

\begin{proof} [Proof of \Cref{lem:excl_bound_transform}]

We drop conditioning on $X$ for exposition. Observe that:
\begin{align*}
d\PTarget(Y & \mid T=0,Z=0)\\
 & =\omega(1)\left(\PObs(Nt|T=Z=0)d\PTrue(Y(0,1)|Nt)+\PObs(Co|T=Z=0)d\PTrue(Y(0,1)|Co)\right)\\
 &  + \omega(0)\left(\PObs(Nt|T=Z=0)d\PTrue(Y(0,0)|Nt)+\PObs(Co|T=Z=0)d\PTrue(Y(0,0)|Co)\right) \\ 
 & = \omega(1) \left( \PObs(Nt|T=Z=0) d\PObs( Y \mid T=0, Z=1) + \PObs(Co|T=Z=0) d\PTrue(Y(0,1)|Co)\right) \\
 & + \omega(0) d \PObs(Y \mid T=0, Z=0). 
\end{align*}
We continue to analyze the unobserved term $d\PTrue(Y(0,1)|Co)$, which we factor as:
\begin{align*}
    \frac{d\PTrue(Y(0,1)|Co)}{d \PObs(Y \mid T=0, Z=0)} & = \frac{d\PTrue(Y(0,1)|Co)}{d \PTrue( Y(0, 0) \mid Co)} \frac{d \PTrue( Y(0, 0) \mid Co)}{d \PObs(Y \mid T=0, Z=0)}
\end{align*}

Note that:
\[
\frac{d\PTrue(Y(0,0)|Co)}{d\PTrue(Y(0,0)|Nt)}=\frac{1}{\PObs\left(Co|Z=T=0\right)}\left(\frac{d\PObs(Y\mid T=0,Z=0)}{d\PTrue(Y(0,0)|Nt)}-\PObs\left(Nt|Z=T=0\right)\right),
\]
because 
\[
\frac{d\PObs(Y\mid T=0,Z=0)}{d\PTrue(Y(0,0)|Nt)}=\PObs\left(Nt|Z=T=0\right)+\PObs\left(Co|Z=T=0\right)\frac{d\PTrue(Y(0,0)|Co)}{d\PTrue(Y(0,0)|Nt)}. 
\]

Combining the results,
\begin{align*}
&\frac{d\PTarget(Y|T=0,Z=0)}{d\PObs(Y|T=0,Z=0)}  \\
& = \omega(0)+\omega(1) \PObs\left(Nt|Z=T=0\right) \frac{d\PObs(Y\mid T=0,Z=1)}{d\PObs(Y\mid T=0,Z=0)} \\
 & \quad+\omega(1)\frac{d\PTrue(Y(0,1)|Co)}{d\PTrue(Y(0,0)|Co)}\left(1-\PObs\left(Nt|Z=T=0\right)\frac{d\PTrue(Y(0,0)|Nt)}{d\PTrue(Y(0,1)|Nt)} \frac{d\PObs(Y\mid T=0,Z=1)}{d\PObs(Y\mid T=0,Z=0)}\right),
\end{align*}
where 
\begin{align*}
   \PObs\left(Nt|Z=T=0\right) & = \PObs( Nt \mid T(0) = 0) = \frac{\PObs(Nt)}{\PObs(Co) + \PObs(Nt)}. 
\end{align*}

Using an analogous argument,
\begin{align*}
&\frac{d\PTarget(Y|T=1,Z=1)}{d\PObs(Y|T=1,Z=1)} =\omega(1) + \omega(0) \frac{\PObs(At)}{\PObs(Co) + \PObs(At)} \frac{d\PObs(Y\mid T=1,Z=0)}{d\PObs(Y\mid T=1,Z=1)}\\
 & \quad+\omega(0)\frac{d\PTrue(Y(1,0)|Co)}{d\PTrue(Y(1,1)|Co)}\left(1-\frac{\PObs(At)}{\PObs(Co) + \PObs(At)} \frac{d\PTrue(Y(1,1)|At)}{d\PTrue(Y(1,0)|At)}\frac{d\PObs(Y\mid T=1,Z=0)}{d\PObs(Y\mid T=1,Z=1)}\right). 
\end{align*}

We continue to analyze $d\PTarget\left(Y|T=0,Z=1\right)$.
\begin{align*}
d\PTarget\left(Y|T=0,Z=1\right) & = \omega(0)d\PTrue\left(Y(0,0)\mid Nt\right) + \omega(1)d\PTrue\left(Y(0,1)\mid Nt\right) \\
 & =\omega(0)\frac{d\PTrue\left(Y(0,0)\mid Nt\right)}{d\PTrue\left(Y(0,1)\mid Nt\right)}d\PTrue\left(Y(0,1)\mid Nt\right)+\omega(1)d\PObs\left(Y\mid T=0,Z=1\right)\\
 & =\left(\omega(0)\frac{d\PTrue\left(Y(0,0)\mid Nt\right)}{d\PTrue\left(Y(0,1)\mid Nt\right)}+\omega(1)\right)d\PObs\left(Y\mid T=0,Z=1\right).
\end{align*}

Analogously,
\begin{align*}
    d \PTarget(Y \mid T = 1, Z = 0) &= \omega(1) d \PTrue (Y (1,1)\mid At) + \omega(0) d \PTrue (Y(1,0)\mid At) \\
    &= \omega(1) \frac{d \PTrue (Y (1,1)\mid At)}{ d \PObs (Y\mid T=1,Z=0)} d \PObs (Y\mid T=1,Z=0) + \omega(0) d \PObs (Y\mid T=1,Z=0) \\
    &= \left( \omega(1) \frac{d \PTrue (Y (1,1)\mid At)}{ d \PTrue (Y (1,0)\mid At)} + \omega(0) \right) d \PObs (Y\mid T=1,Z=0). 
\end{align*}
\end{proof}

\begin{lemma} \label{lem:stochastic_dominance}
    Define the infeasible $\hat{T}^*$ corresponding to \Cref{prop:consistency} as:
\begin{align*}
    \hat{T}^* \equiv \hat{E}\left[\hat{\lambda}(R)Y + \hat{S} a(\hat{\underline{w}}(R),\hat{\bar{w}}(R), S)\right].
\end{align*} Then $\hat{T} \geq \hat{T}^*$ deterministically.
\end{lemma}

\begin{proof} [Proof of \Cref{lem:stochastic_dominance}]
Define the true and estimated residuals from the quantile $S \equiv \lambda(R) Y - Q_{\tau(R)}( \lambda(R) Y \mid R)$ and $\hat{S} \equiv \hat{\lambda}(R)Y - \hat{Q}_{\hat{\tau}(R)}(\hat{\lambda}(R)Y\mid R)$. 

\begin{align*}
    \hat{T} - \hat{T}^* & = \hat{E}\left[ ( \hat{\bar{w}}(R) - \hat{\underline{w}}(R)) \hat{S} \left( 1\{ \hat{S} > 0\} - 1 \{ S > 0 \} \right) \right] \geq 0.
\end{align*}
By sign-matching the cases in which $1\{ \hat{S} > 0\} \neq 1 \{ S > 0 \} $, the result holds deterministically. In particular, when $\hat{S} > S$, the region where $1\{ \hat{S} > 0\} - 1 \{ S > 0 \} = 1$ is where $\hat{S} >0$, so $\hat{S} \left( 1\{ \hat{S} > 0\} - 1 \{ S > 0 \}\right) \geq 0$. Similarly, when $\hat{S} < S$, the region where $1\{ \hat{S} > 0\} - 1 \{ S > 0 \} = - 1$ is where $\hat{S} <0$, so $\hat{S} \left( 1\{ \hat{S} > 0\} - 1 \{ S > 0 \}\right) \geq 0$.
\end{proof}

\subsection{Additional Claims}\label{sec:AdditionalClaims}

\begin{proof}[Proof of \Cref{cor:CDependenceIDdSet}]
    We first show  the identified set of $E[Y(1)]$ is unbounded if $c > \eta_1$; the argument is symmetric for the lower bound of $E[Y(0)]$ if $c > \eta_2$. 

    By the decomposition in the proof of \Cref{prop:IPWIdentifiedSetBound}, the upper bound of the identified set of $E[Y(1)]$ is:
    \begin{align*}
        \psi^+ & = E\left[ \lambda(R) \mu(R) + (1 - \underline{w}(R)) CVaR_{\tau(R)}^+(R) \right],
    \end{align*}
    where $\lambda(R) = Z / e(X)$, $\underline{w}(R) = Z \frac{e(X)}{e(X) + c}$, and $\mu(R) = E[Y \mid R]$. We can lower bound the upper bound as:
    \begin{align*}
        \psi^+ & \geq E\left[ \lambda(R) \mu(R) + 1\{ e(X) < [\eta_1, c] \} (1 - \underline{w}(R)) CVaR_{\tau(R)}^+(R) \right] \\
        & \geq E\left[ \lambda(R) \mu(R) + 1\{ e(X) \in [\eta_1, c] \} \frac{1}{2}  \frac{Z}{e(X)} E[Y - \mu(R) \mid Y, R \geq Q_{1}(Y \mid R)] \right] \\
        & \geq E\left[ \lambda(R) \mu(R) + 1\{ e(X) \in [\eta_1, c] \} \frac{1}{2}  \frac{\eta_1}{c} * \infty \right],
    \end{align*}
    where the infinite conditional value at risk happens for all $X$ with $\sigma(X, 1) > 0$, which happens almost surely by assumption. Since $\eta_1 = \inf p \mid P(e(X) > p) > 0$ by definition and $\eta_1 > 0$ by assumption, $E[I( e(X) \in [\eta_1, c] )\frac{1}{2}  \frac{\eta_1}{c} \mid R] > 0$ so that the identified set would be unbounded.

    Now suppose $c < \eta_1$ and we wish to show that the identified set is uniformly bounded. Write $D$ as the upper bound of the support of $\frac{\sigma(X, Z)}{e(X)} + \frac{\sigma(X, Z)}{1-e(X)}$, which by assumption is finite. Recall by \Cref{prop:IPWIdentifiedSetBound} that the identified set can be upper bounded as follows:
    \begin{align*}
        \psi^+ & \leq E[\lambda(R) Y] + E\left[ \sigma(R) \lambda(R) (1-\underline{w}(R))  \left( \sqrt{ 2 \log( \bar{w}(R) ) } +  \sqrt{2/\pi} + (1 - \underline{w}(R))^{\epsilon} \sqrt{ 1 / (e * \epsilon) } \right)  \right] \\
        & \leq E[ \lambda(R) Y ] + D E\left[ Z \sqrt{ 2 \log( \bar{w}(R) ) }  + \sqrt{2 / \pi} + 1 / e \right],
    \end{align*}
    where $\log$ is the natural logarithm. It only remains to bound $\sqrt{2} E[ Z \log( \bar{w}(R) ) ]$, where $\bar{w}(R) = Z e(X) / ( e(X) - c ) + (1-Z) (1-e(X)) / ( 1 - e(X) - c)$. Suppose the propensity density is $f_{e(X)}(p)$ and is upper bounded by $\bar{f}_{e(X)}$:
    \begin{align*}
        E[ Z \log( \bar{w}(R) ) ] & = \int_{\eta_1}^{1-\eta_2} p \log( p / (p-c) ) f_{e(X)}(p) \\
        & \leq \bar{f}_{e(X)} \int_{\eta_1}^{1-\eta_2} log(p / (p - \eta_1)) d p \leq - \bar{f}_{e(X)} \int_{\eta_1}^{1-\eta_2} \log( p-\eta_1 ) d p  \\
        & \leq - \bar{f}_{e(X)} \int_0^1 log(t) d t = \bar{f}_{e(X)} .
    \end{align*}
    Therefore an APO upper bound is:
    \begin{align*}
        \psi^+ & \leq E[\lambda(R) Y] + E\left[ \sigma(R) \lambda(R) (1-\underline{w}(R))  \left( \sqrt{ 2 \log( \bar{w}(R) ) } +  \sqrt{2/\pi} + (1 - \underline{w}(R))^{\epsilon} \sqrt{ 1 / (e * \epsilon) } \right)  \right] \\
        & \leq E[ \lambda(R) Y ] + D E\left[ \sqrt{2}  \bar{f}_{e(X)} + \sqrt{2 / \pi} + 1 / e \right].
    \end{align*}
    Since this bound holds symmetrically for the lower bound of $E[Y(0)]$, writing $B = 2 D E\left[ \sqrt{2}  \bar{f}_{e(X)} + \sqrt{2 / \pi} + 1 / e \right]$ completes the proof. 
\end{proof}

\subsection{Proofs for Section 2} \label{sec:proofs_sec2}

\begin{proof}[Proof of \Cref{lemma:ReweightingHolds}]
    This is a standard result for importance sampling. 
\end{proof}

\begin{proof}[Proof of \Cref{thm:IDSetUB}]
    Formally define $\psi^+(R)$ as a random variable satisfying:
    \begin{align*}
        \psi^+(R) & = \sup_{W} E_{\PObs}[\underline{w}(R) \lambda(R) Y + (1-\underline{w}(R)) W \lambda(R) Y] \text{ s.t. } W \in [0, 1-\tau(R)] \text{ and } E_{\PObs}[W \mid R] = 1 \text{ a.s.}
    \end{align*}
    
    By the argument in the proof sketch, we can write the upper bound as:
    \small 
    \begin{align*}
        \psi^+ & = \sup_{W} E_{\PObs}[ W \lambda(R) Y] \qquad \text{s.t.} \qquad W \in [\underline{w}(R), \bar{w}(R)] \text{ and } E_{PObs}[W \mid R] = 1 \text{ a.s.} \\
        \psi^+ & = \sup_{W} E_{\PObs}[\underline{w}(R) \lambda(R) Y  +  W \lambda(R) Y] \text{ s.t. } W \in [0, \bar{w}(R) - \underline{w}(R)] \text{ and } E_{PObs}[W \mid R] = 1 - \underline{w}(R) \text{ a.s.} \\
        \psi^+ & = \sup_{W} E_{\PObs}[\underline{w}(R) \lambda(R) Y  +  (1-\underline{w}(R)) W \lambda(R) Y] \text{ s.t. } W \in \bigg[ 0, \underbrace{\frac{\bar{w}(R) - \underline{w}(R)}{1-\underline{w}(R)}}_{1-\tau(R)} \bigg] \text{ and } E_{PObs}[W \mid R] = 1 \text{ a.s.} \\
        & = E_{\PObs}[\psi^+(R)].
    \end{align*}
    \normalsize

    As in the proof sketch, we claim that $\psi^+(R) = \phi^+(R)$ almost surely. By arguments from the DRO literature, e.g. \cite{dvds}, $\psi^+(R) = \underline{w}(R) E[ \lambda(R) Y \mid R ] + (1-\underline{w}(R)) CVaR^+_{\tau(R)}$ almost surely. $\sup_W E_{\PObs}[ W \lambda(R) Y] \text{ s.t. } E_{\PObs}[W \mid R] = 1$ simply reduces to setting $\lambda(R) Y$ to the supremum of the support of $\lambda(R) Y$, i.e. the level-$1$ CVaR.)

    Note that the level-$\tau(R)$ CVaR of $\lambda(R) Y \mid R$ can be defined as $E\left[ Q^+ + \frac{\{Y - Q^+\}_+}{1-\tau(R)} \mid R \right]$. (In the case $\tau(R) = 1$, we evaluate this term as $Q^+ = F_{1}(\lambda(R) Y \mid R).$)

    By \Cref{lemma:RewriteA}, we may write:
    \begin{align*}
        \phi^+(R) & = \lambda(R) Y + (1-\underline{w}(R)) (\lambda(R) Y - Q^+(R) ) \left( \frac{1\{ \lambda(R) Y > Q^+(R) \}}{1-\tau(R)}  - 1 \right) \\
        & = \underline{w}(R) \lambda(R) Y + (1-\underline{w}(R)) \left( Q^+(R) + \frac{\{ \lambda(R) Y - Q^+(R) \}_+}{1-\tau(R)}\right) \\
        & = \underline{w}(R) \lambda(R) Y + (1-\underline{w}(R)) CVaR_{\tau(R)}^+ \text{ a.s.}
    \end{align*}
    Completing the proof. The derivation for $\phi^-$ is analogous.
\end{proof}

\begin{proof}[Proof of \Cref{cor:WorstCaseWeights}]
    Let $\underline{\tau}(R)$ and $\bar{\tau}(R)$ be random variables satisfying $\underline{\tau}(R) = \PObs( \lambda(R) Y < Q+(R) \mid R)$ and $\bar{\tau}(R) = \PObs( \lambda(R) Y \leq Q+(R) \mid R)$ almost surely. Note that $\PObs( \lambda(R) Y = Q^+(R) \mid R) = \bar{\tau}(R) - \underline{\tau}(R)$ almost surely.

    Define the function $\gamma: \R^d \to \R^1$ as follows:
    \begin{align*}
        \gamma(r) & \equiv \begin{cases}
            \frac{1 - \underline{\tau}(r) \underline{w}(r) - (1-\bar{\tau}(r)) \bar{w}(r)}{\bar{\tau}(r) - \underline{\tau}(r)} & \text{if} \quad \bar{\tau}(r) > \underline{\tau}(r) \\
            \underline{w}(r) & \text{else}. 
        \end{cases}
    \end{align*}
    It is clear that $\gamma(r)$ is well-defined. The result that $\gamma(R) \in [\underline{w}(R), \bar{w}(R)]$ almost surely and that the resulting $W^{*}_{sup}$ solves the upper bound optimization problem follows by an adapted version of \cite{dorn2022sharp}'s Proposition 2 argument and the observation that $E[W^{*}_{sup} \mid R] = 1$ almost surely. 
\end{proof}

\begin{proof}[Proof of \Cref{cor:Robustness}]
    Recall from the proof of \Cref{thm:IDSetUB} that we may write:
    \begin{align*}
        \psi^+ & = \underline{w}(R) \lambda(R) Y + (1-\underline{w}(R)) \left( Q^+(R) + \frac{\{ \lambda(R) Y - Q^+(R) \}_+}{1-\tau(R)}\right),
    \end{align*}
    where $\{ x \}_+ = \max\{ x, 0 \}$ and where $Q^+$ is the quantile regression function.

    We make the stronger claim that:
    \begin{align*}
        Q^+(R) & \in \argmin_{\bar{Q}} \E\left[ \underline{w}(R) \lambda(R) Y + (1-\underline{w}(R)) \left( \bar{Q}(R) + \frac{\{ \lambda(R) Y - \bar{Q}(R) \}_+}{1-\tau(R)}\right) \mid R \right] \text{ a.s.}
    \end{align*}

    If $H = 0$ and hence $\tau(R) = 1$, then the right hand side should be understood to be infinite if $\lambda(R) Y > Q^+(R)$. Therefore $Q^+(R) = F_{1}(\lambda(R) Y \mid R)$ is the right-hand side minimizer on the event $H = 0$. Therefore:
    \begin{align*}
        Q^+(R) & \in \argmin_{\bar{Q}} (1-H) \E\left[ \underline{w}(R) \lambda(R) Y + (1-\underline{w}(R)) \left( \bar{Q}(R) + \frac{\{ \lambda(R) Y - \bar{Q}(R) \}_+}{1-\tau(R)}\right) \mid R \right] \text{ a.s.}
    \end{align*}
    
    On the event $H = 1$, we can write the quantile regression function $Q^+$ as some function satisfying the weighted quantile regression definition:
    \begin{align*}
        Q^+(R) & \in \argmin_{\bar{Q}} E_{\PObs}\left[ (1-\tau(R))^{-1} \left( \tau(R) \{\lambda(R) Y - \bar{Q}(R)\}_+ + (1-\tau(R)) \{ \bar{Q}(R) - \lambda(R) Y \}_+  \right) \right] \\
        \Leftrightarrow & \in \argmin_{\bar{Q}} E_{\PObs}\left[ (1-\tau(R))^{-1} \left( \{\lambda(R) Y - \bar{Q}(R)\}_+ + (1-\tau(R)) ( \bar{Q}(R) - \lambda(R) Y )   \right)  \right] \\
        & = E_{\PObs}\left[  \bar{Q}(R) - \lambda(R) Y + \frac{\lambda(R) Y - \bar{Q}(R)\}_+}{1-\tau(R)}  \right].
    \end{align*}
    The $\lambda(R) Y$ term does not affect the minimizer, so that:
    \begin{align*}
        Q^+(R) & \in \argmin_{\bar{Q}} H \E\left[ \underline{w}(R) \lambda(R) Y + (1-\underline{w}(R)) \left( \bar{Q}(R) + \frac{\{ \lambda(R) Y - \bar{Q}(R) \}_+}{1-\tau(R)}\right) \mid R \right] \text{ a.s.}
    \end{align*}

    Combining terms, we obtain:
    \begin{align*}
        Q^+(R) & \in \argmin_{\bar{Q}} \E\left[ \underline{w}(R) \lambda(R) Y + (1-\underline{w}(R)) \left( \bar{Q}(R) + \frac{\{ \lambda(R) Y - \bar{Q}(R) \}_+}{1-\tau(R)}\right) \mid R \right] \text{ a.s.}
    \end{align*}
    Completing the proof. 
\end{proof}

\subsection{Proofs for Section 3}

\begin{proof}[Proof of \Cref{prop:IPW_Y1}]
    This is immediate by the discussion in \Cref{sec:ipw}. 
\end{proof}

\begin{proof} [Proof of \Cref{lemma:EstimandExpressions}]
    For our proofs for the Regression Discontinuity application of \Cref{sec:RDManipulation}, it is useful to use $\ProbManip_0 \equiv \PTarget(M=1 \mid X = c)$.
    
    For the CATE, first observe that $\PTarget(X = c^- \mid X = c) = (1-\ProbManip_0)/2$. To see this, $\PTarget(X = c^- | X = c) = \PTarget (X = c^-| M=0, X = c) \PTarget (M=0 | X = c) = \PTarget(X = c^- | M=0, X = c) (1- \ProbManip_0) = (1-\ProbManip_0)/2$, due to \Cref{lemma:NonManipRandomization}.

    \begin{align*}
    \psi_{CATE} &\equiv E[Y(1) - Y(0)| X = c] \\
    &= E[Y(1)| X \geq c, X = c] P(X \geq c| X = c) + E[Y(1) | X = c^- , X = c] P(X = c^-| X = c) \\
    &\qquad - E[Y(0)| X = c, M=1] P( M= 1|X = c) - P(M=0| X = c) E[Y(0)| X = c, M=0] \\
    &=  E[Y(1)| X \geq c, X = c] \frac{1}{2 - \ProbManip} + E[Y(1) | X = c^- , X = c] \frac{1 - \ProbManip_0}{2} \\
    &\qquad - E[Y(0)| X = c, M=1] \ProbManip_0 - (1-\ProbManip_0) E[Y(0)| X = c, M=0] 
\end{align*}
The final equality uses the definition that $\ProbManip_0 = P(M=1| X = c)$ and the result that $P(X = c^-| X = c) = (1-\ProbManip_0)/2$. 

Finally, $E[Y| X = c^+] = E[Y(1)| X = c^+]$ and $E[Y| X = c^-] = E[Y(0)| X = c^-]$ due to the sharp RD design of Assumption 1.

For the CATT, observe:
\begin{align*}
    \psi_{CATT} & = \frac{E[T Y \mid X = c] - E[T Y(0) \mid X = c]}{E[T \mid X = c]} \\
    & = \frac{E[T Y \mid X = c]}{E[T \mid X = c]} \\
    & - \frac{\PTarget[T = 1, M = 0 \mid X = c] E[Y \mid X = c^-]}{E[T \mid X = c]} \\
    & - \frac{\PTarget[T = 1, M = 1 \mid X = c] E[Y \mid X = c, M = 1]}{E[T \mid X = c]} \\
    & = \frac{E[T Y \mid X = c]}{E[T \mid X = c]} \\
    & - \frac{\PTarget[T = 0, M = 0 \mid X = c] E[Y \mid X = c^-]}{E[T \mid X = c]} \\
    & - \frac{\ProbManip_0 E[Y \mid X = c, M = 1]}{E[D \mid X = c]} \\ \\
    & = \frac{E[T Y \mid X = c] - E[(1-T) Y \mid X = c] - \ProbManip_0 E[Y \mid X = c, M = 1]}{E[T \mid X = c]} \\
    & = \frac{E[(2 T - 1) Y \mid X = c] - \ProbManip_0 E[Y \mid X = c, M = 1]}{\ProbManip_0 + (1-\ProbManip_0) / 2}
\end{align*}

The CLATE is immediate. 
\end{proof}

\begin{proof}[Proof of \Cref{prop:RDEstimands}]
Recall by \Cref{lemma:ImpliedLikelihoodRatios} that selection bounds on $q_t(Y(t)) = \Q( M = 1 \mid X = c, Y(t) )$ from Equation (\ref{eq:RDManipulationBounds}) correspond to likelihood ratios for $\Q \in \mathcal{M}( \underline{w}, \bar{w} )$ as:
    \begin{align*}
        \frac{d \Q(Y(1) \mid X = c, M = 0)}{d \PObs(Y \mid X = c^+)} & = \frac{1}{1-\ProbManip} \left( 1 + 2 \frac{q_1(Y(1))}{1 - q_1(Y(1))} \right)^{-1} \\
        & \in \left[ \frac{1}{1-\ProbManip} \left( 1 + 2 \frac{\ProbManip}{2(1-\ProbManip)} \Lambda_1 \right)^{-1}, \frac{1}{1-\ProbManip} \left( 1 + 2 \frac{\ProbManip}{2(1-\ProbManip)} \Lambda_1^{-1} \right)^{-1} \right] \\
        \frac{d \Q(Y(0) \mid X = c, M = 1)}{d \PObs(Y \mid X = c^-)} & =  \frac{2(1-\ProbManip)}{\ProbManip} \frac{q_0(Y(0))}{1-q_0(Y(0))}
    \end{align*}

    As a result, due to \Cref{eq:M1M0ratio},
    \begin{align*}
        \frac{d \Q(Y(1) \mid X = c, M = 0)}{d \PObs(Y \mid X = c^+)} & \in \left[ \frac{1}{1 - \ProbManip + \ProbManip \Lambda_1}, \frac{1}{1 - \ProbManip + \ProbManip \Lambda_1^{-1}} \right] \\
        \frac{d \Q(Y(0) \mid X = c, M = 1)}{d \PObs(Y \mid X = c^-)} & \in \left[ \Lambda_0^{-1}, \Lambda_0 \right]. 
    \end{align*}

Next, we verify that the target distribution's estimand $E_{\PTarget}[ \lambda(R) Y ]$ achieves the relevant causal estimands. Suppose we call the target estimator $\bar{\psi}$. We proceed as follows:
\begin{align*}
    \bar{\psi}_{CLATE} & = E_{\PTarget}\left[ \lambda(R) Y \mid X = c \right] \\
     & = \PObs( X = c^+ \mid X = c) E_{\PTarget}\left[ \lambda(R) Y \mid X = c^+ \right] - \PObs( X = c^- \mid X = c) E_{\PTarget}\left[ \lambda(R) Y \mid X = c^- \right]   \\
     & =  E_{\PTarget}\left[ Y \mid X = c^+ \right] - E_{\PTarget}\left[ Y \mid X = c^- \right] \\
     & = E_{\PTrue}\left[ Y(1) \mid X = c, M = 0 \right] - E_{\PObs}[Y(0) \mid X = c, M = 0] = \psi_{CLATE}\\
    \bar{\psi}_{CATT} & = E_{\PObs}[Y \mid X = c^+] - E_{\PTrue}[Y(0) \mid X = c^+] = \psi_{CATT} \\
     \bar{\psi}_{CATE} & = E_{\PTrue}[ Y(1) \mid X = c] - E_{\PTrue}[ Y(0) \mid X = c] = \psi_{CATE}. 
\end{align*}

Finally, we verify that the Radon–Nikodym bounds have the appropriate forms by using \Cref{eq:RD_estimands}. The formula for the CLATE is immediate. The formula for the CATT follows by observing that:
\begin{align*}
    \frac{d \PTrue( Y(0) \mid X = c^+ )}{d \PObs( Y \mid X = c^-)} & = \PObs( M = 0 \mid X = c^+) * \frac{d \PTrue( Y(0) \mid M = 0, X = c^+)}{d \PObs( Y \mid X = c^-)} \\
    & \quad + \PObs( M = 1 \mid X = c^+) * \frac{d \PTrue( Y(0) \mid M = 1, X = c^+)}{d \PObs( Y \mid X = c^-)} \\
    & = \PObs( M = 0 \mid X = c^+ ) * 1 + \PObs( M = 1 \mid X = c^+ ) * \frac{d \PTrue( Y(0) \mid X = c, M = 1)}{d \PObs( Y \mid X = c^-)} \\
    & = (1-\ProbManip) + \ProbManip \frac{d \PTrue( Y(0) \mid X = c, M = 1)}{d \PObs(Y \mid X = c^-)} \\ 
   & \in \left[ (1-\ProbManip) + \ProbManip \Lambda_0^{-1}, (1-\ProbManip) + \ProbManip \Lambda_0 \right],
\end{align*}
verifying the form. 

The formula for the CATE similarly follows by the argument from \Cref{lemma:EstimandExpressions}. In particular:
\begin{align*}
    \frac{d \PTrue( Y(1) \mid X = c)}{d \PObs( Y \mid X = c^+)} & = \PObs(X = c^+ \mid X=c) * \frac{d \PTrue( Y(1) \mid X = c^+)}{d \PObs( Y \mid X = c^+)} \\
    & \quad + \PObs(X = c^- \mid X=c) * \frac{d \PTrue( Y(1) \mid X = c^-)}{d \PObs( Y \mid X = c^+)}  \\
    & = \PObs(X = c^+ \mid X=c) * 1 + \PObs(X = c^- \mid X = c) * \frac{d \PTrue( Y(1) \mid X = c, M = 0)}{d \PObs( Y \mid X = c^+)}.
\end{align*}

Similarly:
\begin{align*}
    \frac{d \PTrue( Y(0) \mid X = c)}{d \PObs( Y \mid X = c^-)} & = \PObs( X = c^+ \mid X = c) * \frac{d \PTrue( Y(0) \mid X = c^+)}{d \PObs( Y \mid X = c^-)} + \PObs( X = c^- \mid X = c) * 1 \\
    & = \ProbManip \PObs(X = c^+ \mid X=c) \frac{d \PTrue( Y(0) \mid X = c, M = 1)}{d \PObs( Y \mid X = c^-)} + (1 - \ProbManip \PObs(X = c^+ \mid X=c)) 
\end{align*}
Recall from the proof of \Cref{lemma:ImpliedLikelihoodRatios} that $\PObs(X = c^+ \mid X=c) = 1 / (2-\ProbManip)$. Therefore:
\begin{align*}
    1\{X = c^+\} \underline{w}(R) & = 1\{X = c^+\} \left( \frac{1}{2-\ProbManip} + \frac{1-\ProbManip}{1-2\ProbManip} \left( 1-\ProbManip + \ProbManip \Lambda_1 \right)^{-1} \right) \\
    1\{X = c^+\} \bar{w}(R) & = 1\{X = c^+\} \left( \frac{1}{2-\ProbManip} + \frac{1-\ProbManip}{1-2\ProbManip} \left( 1-\ProbManip + \ProbManip \Lambda_1^{-1} \right)^{-1} \right) \\
    1\{X = c^-\} \underline{w}(R) & = 1\{X = c^-\} \left( \frac{2 - 2 \ProbManip}{2 - \ProbManip} + \frac{\ProbManip}{2 - \ProbManip} \Lambda_0^{-1} \right) \\
    1\{X = c^-\} \bar{w}(R) & = 1\{X = c^-\} \left(  \frac{2 - 2 \ProbManip}{2 - \ProbManip} + \frac{\ProbManip}{2 - \ProbManip} \Lambda_0 \right),
\end{align*}
completing the final proof. 
\end{proof}

\begin{proof}[Proof of \Cref{prop:RDSharpness}]
    First, we show that if $\Q \in \mathcal{M}'(\Lambda)$, then $\Q \in \mathcal{M}(\Lambda, \Lambda)$. It is clear that it only remains to show that for $t=1,0$, we have $\frac{\Q(M = 1 \mid Y(t), X = c)}{\Q( M = 0 \mid Y(t), X = c)} / \frac{\PObs( M = 1 \mid X = c)}{\PObs(M = 0 \mid X = c)} \in [\Lambda^{-1}, \Lambda]$ almost surely. This holds by iterated expectations over $\int \Q(M = 1 \mid Y(1), Y(0), X = c) d \Q (Y(1-t) \mid X = c, Y(1-t))$ and the restriction on the domain of $\Q(M = 1 \mid Y(1), Y(0), X = c)$ under Equation (\ref{eq:JointPOSelection}). Therefore $\Q \in \mathcal{M}(\Lambda, \Lambda)$.

    We now proceed to construct a $\Q'$ corresponding to $\psi_1$ and $\psi_0$.

    For simplicity, we proceed assuming that $Y$ is continuously distributed. (If the conditional distribution of $Y$ contains mass points at the referenced quantiles, add a uniform random variable to provide a strict ordering on observations of $Y \mid X$ and then apply the construction below to the quantiles of the tie-broken $Y$.)
    
    Recall that all of our causal estimands of interest can be written as observable linear transformations of the partially identified average potential outcomes $E[ Y(t) \mid X = c, M=1-t]$ for $t=1,0$. 

    We define a distribution $\Q_{+, +}$ which we show is in the model family $\mathcal{M}'(\Lambda) \subset \mathcal{M}(\Lambda, \Lambda) \subset \mathcal{M}'(\infty)$ and achieves the upper bounds on both average potential outcomes within $\mathcal{M}(\Lambda, \Lambda)$. We then argue that any $\Q_{\pm, \pm}$ combination of upper and lower bounds is feasible. Finally, we argue by mixture that any pair of causal estimands of interest between the two lower and upper bounds under $\mathcal{M}(\Lambda, \Lambda)$ are feasible for some distribution $\Q \in \mathcal{M}'(\Lambda)$.  

    For convenience, we separate the separate claims into different sub-sub-sections.

    \subsubsection*{Construction of Worst-Case Distribution}

    We will define a distribution $\Q_{+,+}$ with an unobserved confounder ``$V$." $V = 1$ will correspond to a high manipulation probability, small treated potential outcomes (to maximize $E[Y(1) \mid X = c^+, M = 0]$), and large untreated outcomes (to maximize $E[Y(0) \mid X = c^+, M = 1]$). At the threshold, a fraction $\tau_1$ of treated and $1-\tau_0$ of untreated observations will have $V=1$, where:
    \begin{align*}
        \tau_1 & = \frac{1 + \eta (\Lambda - 1)}{\Lambda + 1} \\
        \tau_0 & = \frac{\Lambda}{\Lambda + 1}. 
    \end{align*}
    In order to facilitate this drawing, we define the observable $V$ function: 
    \begin{align*}
        V(x, t, y) & = \left\{\begin{array}{rl}
        1\{ y \leq Q_{\tau_1}(Y \mid X=c^+) \}     &  \mbox{if } t = 1 \\
        1\{ y > Q_{\tau_0}(Y \mid X=c^-) \}     & \mbox{if } t = 0. 
        \end{array} \right. 
    \end{align*}
    Finally, in order to match worst-case selection probabilities, $M \mid X = c, V=v$ will be drawn iid from a $Bern( T q_{++}(v) )$ distribution, where $q_{++}(v)$ is defined as:
    \begin{align*}
        q_{++}(v) & = \left\{
        \begin{array}{rl}
        \frac{\eta}{(\Lambda + 1) (1-\tau_1)}     & \mbox{if } v = 0  \\
        \frac{\eta \Lambda}{ (\Lambda + 1) \tau_1 }   & \mbox{if } v = 1.
        \end{array} \right. 
    \end{align*}
    Finally, we will draw unobserved potential outcomes from the distribution with the same value of $M$ and $V$.

    Formally, the distribution $\Q_{+,+}$ over $(M, X(1), X(0), Y(1), Y(0), T)$ is defined as follows:
    \begin{enumerate}
        \item Draw $(X, T, Y) \sim \PObs$
        \item Define the random variable $V = V(X, T, Y)$
        \item Draw $M \sim Bern( T q_{++}(V) )$ 
        \item Set the lower-case variable realizations for later use $x = X$, and $y = Y$, $t=T$, $v=V$, and $m=M$
        \item Set $X(M) = X$ and $Y(t) = y$
        \item Draw $X(1-M)$ from the distribution of $X \mid M = 1-m$ under $\Q_{+,+}$ defined so far 
        \item If $m = 0$, draw $Y(1-t)$ from the distribution of $Y \mid X = 2c - x, M = 0, V = v$ under the construction of $\Q_{+,+}$ so far 
        \item If $m=1$, draw $Y(0)$ from the distribution of $Y \mid X = 2 c - x, V = V(t, y)$ under the construction of $\Q_{+,+}$ so far 
        \item Return data $(M, X(1), X(0), Y(1), Y(0), T)$
    \end{enumerate}

    \subsubsection*{Showing Constructed Distribution is in Model Family}

    We wish to show that $\Q_{+,+} \in \mathcal{M}^\prime(\Lambda)$, i.e.:
    \begin{enumerate}[label=(\alph*)]
        \item The distribution of $(X = X(M), Y = Y(T), T)$ under $\Q_{+,+}$ marginalizes to the distribution of $(X, T, Y)$ under $\PObs$\label{req:MarginalizesToObserved}
        \item $\Q_{+,+}(T = 0, M = 1) = 0$\label{req:ManipulatorsAreTreated}
        \item $\Q_{+,+}(M = 1 \mid X = c^+) = \ProbManip$ \label{req:ManipulationProbRight}
        \item $\Q_{+,+}( Y(t) \leq y \mid X = x, M = 0)$ is continuous at $c$ and $\Q_{+,+}( Y(1) \leq y \mid x = c, M = 1)$ is right-continuous at $x = c$. \label{req:ContinuityOfCDFs}
        \item $\Q_{+,+}( Y(1), Y(0), T(0), M \mid X = c )$ is a conditional distribution that is well-defined as the appropriate continuous limit. \label{req:ContinuousDistribution} 
        \item  $\Q_{+,+}$ satisfies Equation (\ref{eq:JointPOSelection}): \label{req:JointPOSelection}
        \begin{align*}
            \left. \frac{\Q_{+,+}( M = 1 \mid Y(1), Y(0), X = c)}{\Q_{+,+}( M = 0 \mid Y(1), Y(0), X = c)} \right/ \frac{\PObs( M = 1 \mid X = c)}{\PObs( M = 0 \mid X = c)} \in [\Lambda^{-1}, \Lambda]. 
        \end{align*}
    \end{enumerate}

    Requirements \ref{req:MarginalizesToObserved} and \ref{req:ManipulatorsAreTreated} are immediate.

    Requirement \ref{req:ManipulationProbRight} follows by inspection:
    \begin{align*}
        \Q_{+,+}(M = 1 \mid X = c^+) & = E_{\PObs}[ q_{++}(V) \mid X = c^+] \\
        & = \sum_v \PObs(V = v \mid X = c^+) q_{++}(v) \\
        & = \frac{\ProbManip}{\Lambda + 1} + \frac{\ProbManip \Lambda}{\Lambda + 1} = \ProbManip = \PObs(M = 1 \mid X = c^+).
    \end{align*}

    The requirement \ref{req:ContinuityOfCDFs} holds as follows. $\PTrue( Y \leq y \mid X = x)$ is left- and right-continuous at $x=c$ as follows. For left-continuity, for $x < c$, $\PTrue( Y \leq y \mid X = x) = \PTrue( Y \leq y \mid M=0, X = x)$ is left-continuous at $x=c$ by assumption. For right-continuity, for $x > c$, 
    \begin{align*}
        \PTrue( Y \leq y \mid X = x) &= \PTrue( M = 1 \mid X=x) \PTrue( Y \leq y \mid M = 1, X = x) \\
        &\quad + \PTrue( M = 0 \mid X=x) \PTrue( Y \leq y \mid M = 0, X = x), 
    \end{align*}
    which by assumption is right-continuous at $x=c$. As a result, $\Q_{+,+}(Y(t) \leq y \mid X = x, M = 0)$ is continuous in $x$ at $x = c$. Note also that for $X > c$ and $y \leq Q_{\tau_1}(Y \mid X = c^+)$, we have:
    \small 
    \begin{align*}
        \Q_{+,+}( Y(t) \leq y \mid X, M = 1) & = \frac{\PObs( Y(t) \leq y \mid X) q_{++}(0)}{\PObs( Y(t) \leq Q_{\tau_1}( Y \mid X = c^+) \mid X) q_{++}(0) + \PObs( Y(t) > Q_{\tau_1}( Y \mid X = c^+) \mid X)},
    \end{align*} 
    \normalsize 
    and similarly, for $Y > Q_{\tau_1}(Y \mid X = c^+)$:
    \small 
    \begin{align*}
        \Q_{+,+}( Y(t) \leq y \mid X, M = 1) & = \frac{\PObs( Y(t) \leq Q_{\tau_1}(Y \mid X = c^+) \mid X) q_{++}(0) + \PObs( Y(t) \in [Q_{\tau_1}(Y \mid X = c^+), y] \mid X) q_{++}(0)}{\PObs( Y(t) \leq Q_{\tau_1}( Y \mid X = c^+) \mid X) q_{++}(0) + \PObs( Y(t) > Q_{\tau_1}( Y \mid X = c^+) \mid X)},
    \end{align*} 
    \normalsize 
    both of which are continuous in $X$.
    
    \ref{req:ContinuousDistribution} holds as follows, where we write $T(x) = 1\{ x > c \}$. For $m=1$ and some $A \subseteq \R^2$:
    \small 
    \begin{align*}
        & \lim_{\varepsilon \to 0^+} \Q_{+,+}( (Y(1), Y(0)) \in A, T(0) = t, M = 1, V = v \mid |X - \varepsilon| \leq c ) \\
        = & \int 1\{X > c, (Y(1), Y(0)) \in A\} q_{++}( v ) \frac{1}{2} d \Q_{+,+}( V, Y(0) \mid X, Y, M = 1) d \PObs( Y, X \mid |X - c| \leq \varepsilon ) \\
        = & \int q_{++}( v ) \frac{1\{ X > c, (Y(1), Y(0)) \in A, V(X, 1, Y) = v \}}{2} d \Q_{+,+}( Y \mid |X - c|, T=0, V=v) d \PObs( Y, X \mid |X - c| \leq \varepsilon ),
    \end{align*}
    \normalsize 
    which has a well-defined limit by one-sided continuity of the CDF of $Y \mid X$ under our assumptions. The remaining claim holds by taking the sum over $v$. Similarly, for $m = 0$:
    \begin{align*}
        & \lim_{\varepsilon \to 0^+} \Q_{+,+}( (Y(1), Y(0)) \in A, T(0) = t, M = 1, V = v \mid |X - \varepsilon| \leq c ) \\
        = & 1\{ X \leq c \} \Q_{+,+}( (Y(1), Y(0)) \in A, T(0) = t, M = 0, V = v \mid |X - \varepsilon| \leq c ) \\
        + & 1\{ X > c \} \Q_{+,+}( (Y(1), Y(0)) \in A, T(0) = t, M = 0, V = v \mid |X - \varepsilon| \leq c ), 
    \end{align*}
    which is continuous by analogous arguments.

    Requirement \ref{req:JointPOSelection} involves a longer argument, so we show it in a separate section.

    \subsubsection*{$\Q_{+,+}$ Satisfies the Odds Ratio Bound}

    We wish to show:
    \begin{align*}
        \left. \frac{\Q_{+,+}( M = 1 \mid Y(1), Y(0), X = c)}{\Q_{+,+}( M = 0 \mid Y(1), Y(0), X = c)} \right/ \frac{\PObs( M = 1 \mid X = c)}{\PObs( M = 0 \mid X = c)} \in [\Lambda^{-1}, \Lambda]. 
    \end{align*}
    Recall that $\frac{\PObs(M = 1 \mid X = c)}{\PObs(M = 0 \mid X = c)} = \frac{\eta}{2 (1-\eta)}$.

    Notice that $Y(1), Y(0)$ are iid given $M, X, V$ under $\Q_{+,+}$. As a result, we can equivalently show that for $v = 0, 1$:
    \begin{align*}
        \frac{\Q_{+,+}( M = 1 \mid V=v, X = c)}{\Q_{+,+}( M = 0 \mid V=v, X = c)} & = \frac{\sum_t \Q_{+,+}( T=t, V=v, M = 1  \mid X = c)}{\sum_t \Q_{+,+}( T=t, V=v, M = 0 \mid X = c)} \\
        & \in \left[ \Lambda^{-1} \frac{2 (1-\eta)}{\eta}, \Lambda  \frac{2 (1-\eta)}{\eta} \right]. 
    \end{align*}

    We do so case-by-case for values of $v$. In the $v = 0$ case:
    \begin{align*}
        \sum_t \Q_{+,+}( T=t, V=0, M = 1  \mid X = c) & = \frac{1}{2-\eta} (1-\tau_1) q_{++}(0) = \frac{\eta}{(\Lambda + 1)(2-\eta)} \\
        \sum_t \Q_{+,+}( T=t, V=0, M = 0 \mid X = c) & = \frac{1}{2-\eta} (1-\tau_1) (1 - q_{++}(0)) + \frac{1-\eta}{2-\eta} \tau_0 \\
        & = \frac{1}{2-\eta} \left( 1-\tau_1 - \frac{\eta}{\Lambda +1} \right) + \frac{\Lambda (1-\eta)}{(2-\eta)(\Lambda + 1)}  \\
        & = \frac{1}{(2-\eta)(\Lambda + 1)} \left( \Lambda (1-\eta) \right) + \frac{\Lambda (1-\eta)}{(2-\eta)(\Lambda + 1)} \\
        & = \frac{2 \Lambda (1-\eta)}{(\Lambda+1)(2-\eta)} \\
       \frac{\Q_{+,+}( M = 1 \mid V=0, X = c)}{\Q_{+,+}( M = 0 \mid V=0, X = c)} & = \Lambda^{-1} \frac{\eta}{2 (1-\eta)}.
    \end{align*}
    Similarly, in the $v=1$ case:
    \begin{align*}
        \sum_t \Q_{+,+}( T=t, V=1, M = 1  \mid X = c) & = \frac{1}{2-\eta} \tau_1 q_{++}(1) = \frac{\Lambda \eta}{(\Lambda + 1)(2-\eta)} \\
        \sum_t \Q_{+,+}( T=t, V=1, M = 0 \mid X = c) & = \frac{1}{2-\eta} \tau_1 (1 - q_{++}(1)) + \frac{1-\eta}{2-\eta} (1-\tau_0) \\
        & = \frac{1}{(\Lambda+1)(2-\eta)} \left( 1 - \eta \right) + \frac{1-\eta}{(\Lambda+1)(2-\eta)} \\
       \frac{\Q_{+,+}( M = 1 \mid V=1, X = c)}{\Q_{+,+}( M = 0 \mid V=1, X = c)} & = \Lambda \frac{\eta}{2 (1-\eta)}.
    \end{align*}
    Therefore, $\Q_{+,+} \in \mathcal{M}'(\Lambda)$.

    \subsubsection*{Showing Constructed Distribution is Worst-Case for Both Estimands}

    We wish to show that $\E_{\Q_{+,+}}[ Y(1) \mid X = c, M = 0 ]$ and $\E_{\Q_{+,+}}[ Y(0) \mid X = c, M = 1 ]$ are maximal within $\mathcal{M}(\Lambda, \Lambda)$. By the proof of \Cref{prop:RDEstimands}, the maximal conditional expectations are achieved if:
    \begin{align*}
        \frac{d \Q_{+,+}(Y(1) \mid X = c, M = 0)}{d \PObs( Y(1) \mid X = c^+)} & = \left\{
        \begin{array}{rl}
        \frac{\Lambda}{\Lambda + \eta (1 - \Lambda)}     &  \mbox{if } Y(1) > Q_{\tau_1}(Y \mid X = c^+) \\
        \frac{1}{1 + \eta (\Lambda - 1)}     & \mbox{if } Y(1) \leq Q_{\tau_1}(Y \mid X = c^+)
        \end{array}
        \right. \\
        \frac{d \Q_{+,+}(Y(0) \mid X = c, M = 1)}{d \PObs( Y(0) \mid X = c^-)} & = \left\{
        \begin{array}{rl}
        \Lambda     &  \mbox{if } Y(0) > Q_{\tau_0}(Y \mid X = c^+) \\
        \Lambda^{-1}     & \mbox{if } Y(0) \leq Q_{\tau_0}(Y \mid X = c^+). 
        \end{array}
        \right. 
    \end{align*}

    We begin with the $Y(1)$ distribution:
    \begin{align*}
        \frac{d \Q_{+,+}( Y(1) \mid X = c, M = 0)}{d \PObs( Y(1) \mid  X = c^+ )} & = \frac{d \Q_{+,+}( Y(1) \mid X = c^+, M = 0)}{d \PObs( Y(1) \mid  X = c^+ )} \\
         & = \frac{d \Q_{+,+}( Y(1), M = 0 \mid X = c^+)}{(1-\eta) d \PObs( Y(1) \mid  X = c^+ )} \\
         & = \frac{\Q_{+,+}( M = 0 \mid X = c^+, Y(1))}{1 - \eta} \\
         & = \frac{1 - \Q_{+,+}(M = 1 \mid X = c^+, Y(1))}{1 - \eta} \\
         & = \frac{1 - q_{++}( V(X, 1, Y(1))}{1 - \eta} \\
         & = \left\{
         \begin{array}{rl}
         \frac{1 - \frac{\eta}{\Lambda + \eta (1 - \Lambda)}}{1 - \eta}     & \mbox{if } Y(1) < Q_{\tau_1}(Y \mid X = c^+) \\ 
         \frac{1 - \frac{\eta \Lambda}{1 + \eta (\Lambda - 1)}}{1 - \eta}     &  \mbox{if } Y(1) \leq Q_{\tau_1}(Y \mid X = c^+) 
         \end{array}
         \right.  \\
         & = \left\{
         \begin{array}{rl}
         \frac{\Lambda }{\Lambda + \eta (1 - \Lambda)}  & \mbox{if } Y(1) < Q_{\tau_1}(Y \mid X = c^+) \\
         \frac{1}{1 + \eta (\Lambda - 1)}     &  \mbox{if } Y(1) \leq Q_{\tau_1}(Y \mid X = c^+) 
         \end{array}
         \right. ,
    \end{align*}
    which are the desired likelihood ratios.

    We continue with the $Y(0)$ distribution:
    \begin{align*}
        \frac{d \Q_{+,+}( Y(0) \mid X = c, M = 1)}{d \PObs( Y(0) \mid  X = c^- )} & = \frac{d \Q_{+,+}( Y(0) \mid X = c^+, M = 1)}{d \PObs( Y(0) \mid  X = c^- )}  \\ 
        & = \frac{\Q_{+,+}( V = V(X, 0, Y(0)) \mid X = c^+, M = 1)}{d \PObs( V = V(X, 0, Y(0)) \mid  X = c^- )}\\ 
        & = \frac{\Q_{+,+}( V = V(X, 0, Y(0)), M = 1 \mid X = c^+)}{\eta d \PObs( V = V(X, 0, Y(0)) \mid  X = c^- )}.
    \end{align*}
    By inspection of the two cases, this is $\Lambda V(X, 0, Y(0)) + \Lambda^{-1} (1-V(X, 0, Y(0)))$, i.e. the desired likelihood ratio.

    \subsubsection*{Construction of All Mixtures of Estimands}

    By symmetric arguments, we can find a $\Q_{+,-}$ that achieves the maximal value of $E_{\Q}[Y(1) \mid X = c, M=0]$ and minimal value of $\E_{\Q}[ Y(0) \mid X = c, M=1]$ over $\Q \in \mathcal{M}(\Lambda, \Lambda)$, as well as analogous $\Q_{-,+}$ and $\Q_{-,-}$ for all of the other extreme combinations. 

    Now suppose we are achieving some $(\psi_1, \psi_0)$ within both pointwise bounds, i.e. there are some $\alpha_1, \alpha_0 \in [0, 1]$ such that:
    \begin{align*}
        \alpha_1 \left( E_{\Q_{+,+}}[Y(1) \mid X=c, M=0] - E_{\Q_{-,-}}[Y(1) \mid X=c, M=0] \right) & = \psi_1 - E_{\Q_{-,-}}[Y(1) \mid X=c, M=0] \\
        \alpha_0 \left( E_{\Q_{+,+}}[Y(0) \mid X=c, M=1] - E_{\Q_{-,-}}[Y(0) \mid X=c, M=1] \right) & = \psi_0 - E_{\Q_{-,-}}[Y(0) \mid X=c, M=1].
    \end{align*}
    Define $\Q^*$ as follows:
    \begin{itemize}
        \item Draw $V_1 \sim Bern(\alpha_1)$ and $V_0 \sim Bern(\alpha_0)$
        \item Write $S_d$ to be $+$ if $V_t = 1$ and $S_t$ to be $-$ if $V_t = 0$
        \item Draw $(X, M, D, Y(1), Y(0)) \sim \Q_{S_1, S_0}$
    \end{itemize}
    By inspection, $\Q^* \in \mathcal{M}'(\Lambda)$. It also has:
    \begin{align*}
        \E_{\Q^*}[Y(t) \mid X=c, M=1-t] & = \alpha_t \psi_t^+(\Lambda) + (1-\alpha_t) \psi_t^-(\Lambda) \\
        & = \frac{\psi_t^+(\Lambda) \left( \psi_t - \psi_t^-(\Lambda) \right) + \left( \psi_t^+(\Lambda) - \psi_t \right) \psi_t^-(\Lambda)}{\psi_t^+(\Lambda) - \psi_t^-(\Lambda)} \\
        & = \psi_t 
    \end{align*} 
    Demonstrating the claim.

\end{proof}

\begin{proof} [Proof of \Cref{prop:ExclusionBounds}]
With the $\lambda(R)$ as stated, we have:
\begin{align*}
    E_{\PTarget}[ \lambda(R) Y \mid X] & = \eta(X) \PObs(Z = 1 \mid X) \frac{1 - \PObs(Z = 1 \mid X)}{\PObs(Z = 1 \mid X) \PObs(Z = 0 \mid X)} E_{\PTarget}[ Y \mid X, Z=1 ] \\
    &\quad + \eta(X) \PObs(Z = 0 \mid X) \frac{0 - \PObs(Z = 1 \mid X)}{\PObs(Z = 1 \mid X) \PObs(Z = 0 \mid X)} E_{\PTarget}[ Y \mid X, Z=0 ]  \\
    & = \eta(X) \left( E_{\PTarget}[ Y \mid X, Z = 1 ] - E_{\PTarget}[ Y \mid X, Z = 0 ] \right) \\
    & = E_{\PTarget}\left[ \eta(X) E_{\PTarget} \left[ \sum_z \omega(z \mid X) \{ Y(T(1), z) - Y(T(0), z) \} \mid X, T(1), T(0) \right] \mid X \right] \\
    & = E_{\PTarget}\left[ \eta(X) 1\{Co\} \sum_z \omega(z \mid X) ( Y(1, z) - Y(0, z) ) \mid X \right].
\end{align*}
By iterated expectations, $E_{\PTarget}[\lambda(R) Y] = \psi$. 

The constraints on $\frac{d \PTarget(Y \mid X, T, Z)}{d \PObs(Y \mid X, T, Z)}$ follow by any pointwise bounds on those likelihood ratios that are derived as implications of Appendix \Cref{lem:excl_bound_transform}.
\end{proof}

\begin{proof}[Proof of \Cref{prop:ConservativeBounds}]
    We suppress the dependence on $X$ because it is constant. 
    
    Notice that in this example, $\PObs(Co) = \PObs(T = 1 \mid Z=1) - \PObs(T = 1 \mid Z = 0) = 1/2$; $\PObs(Nt) = 1 - \PObs(T = 1 \mid Z = 1) = 1/2$; and $\PObs(Z = 1) = 1/2$. As a result, we can write $\omega(z \mid x) = 2 z$. 

    Because $\PObs(At) = 0$, we directly observe $\PTrue( Y(1, 1) \mid Co ) = \PObs( Y \mid Z = 1, T = 1)$ and $E[ Y(1, 1) \mid Co] = 0$. The partial identification problem is to bound $\PTrue( Y(0, 1) \mid Co)$.

    Under the structural model, we have $1 \leq \frac{d \PTrue(Y(0, 0) \mid Nt)}{d \PTrue( Y(0, 1) \mid Nt)} = \frac{d \PTrue(Y(0, 0) \mid Nt)}{d \PObs( Y \mid Z = 1, T = 0)} < \infty$, which implies that the distribution of $Y(0, 0) \mid Nt$ is the distribution of $Y \mid Z = 1, T=0$. To see this, if $\frac{d \PTrue(Y(0, 0) \mid Nt)}{d \PTrue( Y(0, 1) \mid Nt)} > 1$ with positive probability, then $\frac{d \PTrue(Y(0, 0) \mid Nt)}{d \PTrue( Y(0, 1) \mid Nt)} < 1$ with some other positive probability, which contradicts the lower bound, so $\frac{d \PTrue(Y(0, 0) \mid Nt)}{d \PTrue( Y(0, 1) \mid Nt)} =1$ almost everywhere. Notice analogously that under the structural model, $0 \leq \frac{d \PTrue( Y(0, 1) \mid Co)}{d \PTrue( Y(0, 0) \mid Co)} \leq 1$, which implies  and the distribution of $Y(0, 1) \mid Co$ is the same as the distribution of $Y(0, 0) \mid Co$. As a result, $E[Y(0, 1) \mid Co]$ is point-identified:
    \begin{align*}
        E[ Y \mid Z = 0, T = 0] & = \PObs(Co \mid Z = 0, T = 0) E[ Y(0, 0) \mid Co ] + \PObs( Nt ) E [ Y(0, 0) \mid Nt] \\
       0 & = 0.5 E[ Y(0, 1) \mid Co ] + 0.5 E[Y \mid Z = 1, T = 0] = 0.5 E[ Y(0, 1) \mid Co ].
    \end{align*}
    The sharp bounds are the singleton $\{0\}$.

    Now consider the statistical bounds of \Cref{prop:ExclusionBounds}. The likelihood ratios of interest from \Cref{lem:excl_bound_transform} are:
    \begin{align*}
        & \frac{d \PTarget( Y \mid T=1, Z=1)}{d \PObs( Y \mid T=1, Z=1)} = \overbrace{\omega(1)}^{=1} + \overbrace{\omega(0)}^{=0} \overbrace{\frac{d \PTrue(Y(1, 0) \mid Co)}{d \PTrue(Y(1, 1) \mid Co)}}^{\in [1, \infty)}   \\ 
        & \frac{d \PTarget(Y \mid T=1, Z=0)}{d \PObs(Y \mid T=1, Z=0)} = 1 \\
        & \frac{d \PTarget(Y \mid T=0, Z=0)}{d \PObs(Y \mid T=0, Z=0)} = \overbrace{\omega(0)}^{=0} + \overbrace{\omega(1)}^{=1} \overbrace{\frac{d \PTrue( Y(0, 1) \mid Co)}{d \PTrue( Y(0, 0) \mid Co)}}^{\in [1, \infty)} \\
         & \ \ + \underbrace{\omega(1)}_{=1} \underbrace{\frac{\PObs(Nt)}{1-\PObs(At)}}_{=1/2} \underbrace{\frac{d \PObs(Y \mid T=0, Z=1)}{d \PObs(Y \mid T=0, Z=0)}}_{=1} \bigg\{ 1  - \underbrace{\frac{d \PTrue(Y(0, 1) \mid Co)}{d \PTrue( Y(0, 0) \mid Co)}}_{\in [1, \infty)} \underbrace{\frac{d \PTrue( Y(0, 0) \mid Nt)}{d \PTrue( Y(0, 1) \mid Nt)}}_{\in [1, \infty)} \bigg\} \\
        & \frac{d \PTarget(Y \mid T=0, Z=1)}{d \PObs(Y \mid T=0, Z=1)} = \underbrace{\omega(1)}_{=1} + \underbrace{\omega(0)}_{=0} \underbrace{\frac{d \PTrue( Y(0, 0) \mid Nt)}{d \PTrue( Y(0, 1) \mid Nt)}}_{ \in [1/2, \infty)}.
    \end{align*}
    Therefore we obtain the likelihood ratio equalities $\frac{d \PTarget( Y \mid T=1, Z=1)}{d \PObs( Y \mid T=1, Z=1)} = \frac{d \PTarget( Y \mid T=1, Z=0)}{d \PObs( Y \mid T=1, Z=0)} = \frac{d \PTarget( Y \mid T=0, Z=1)}{d \PObs( Y \mid T=0, Z=1)} = 1$, while under the pointwise approach, $ \frac{d \PTarget(Y \mid T=0, Z=0)}{d \PObs(Y \mid T=0, Z=0)}$ could seemingly be as large as $\infty$ and as small as $0$. Therefore our approach's partial identification bounds on $E[ Y \mid T, Z]$ are equal to the the singleton $\{0\}$ for $(1-T) (1-Z) = 0$ and are equal to the domain of $Y$, $[-1, 1]$, for $T=0, Z=0$. 

    The pointwise bounds are $$\left[ - 2 \PObs(Co) \PTrue(T = 0 \mid Z = 0), 2 \PObs(Co) \PTrue(T = 0 \mid Z = 0) \right].$$Note that $\PObs(Co) = 1/2$ and $\PTrue(T = 0 \mid Z=0) = 1$, so that the partially identified set is $[-1, 1]$.
\end{proof}

\subsection{Proofs for Section 4}

We will use some notation for the estimation and inference proofs for \Cref{sec:EstimationAndInference}. It is convenient to define the true and estimated residuals from the quantile $S \equiv \lambda(R) Y - Q_{\tau(R)}( \lambda(R) Y \mid R)$ and $\hat{S} \equiv \hat{\lambda}(R)Y - \hat{Q}_{\hat{\tau}(R)}(\hat{\lambda}(R)Y\mid R)$.

\begin{proof} [Proof of \Cref{prop:consistency}]
When $\hat{\underline{w}}$ and $\hat{\bar{w}}$ are consistent, we get $\hat{\tau}\xrightarrow{p}\tau$. The first part of the proposition is immediate by applying the continuous mapping theorem and the law of large numbers for iid observations. Even though we have an indicator function, we only have a kink point rather than a discontinuity, so the function is still continuous.

For the second part of the lemma, observe that, using $\hat{T}^*$ in \Cref{lem:stochastic_dominance}, $\hat{T}\geq \hat{T}^*$. Then, using $Q_{\tau}( \lambda(R) Y \mid R)$ to denote the $\tau$-th quantile of $Y$ for the given $R$ and using $\hat{Q}_{\hat{\tau}(R)}(\hat{\lambda}(R) Y \mid R)$ to denote the estimated conditional quantile function,
\begin{align*}
    \hat{T}^* & = \hat{E}\left[\hat{\lambda}(R)Y + \hat{S} a(\hat{\underline{w}}(R),\hat{\bar{w}}(R), S)\right] \\
    &=  \hat{E}\left[\hat{\lambda}(R)Y + \left(\hat{\lambda}(R) Y - \hat{Q}_{\hat{\tau}(R)}(\hat{\lambda}(R) Y \mid R) \right) a(\hat{\underline{w}}(R),\hat{\bar{w}}(R), S)\right] \\
    &= \hat{E}\left[\hat{\lambda}(R)Y + \left(\hat{\lambda}(R) Y - \hat{Q}_{\hat{\tau}(R)}(\hat{\lambda}(R) Y \mid R) \right) \left( \left(\hat{\bar{w}}(R)-\hat{\underline{w}}(R)\right)1\left\{ S > 0 \right\} -(1-\hat{\underline{w}}(R)) \right) \right].
\end{align*}

By applying the continuous mapping theorem and the weak law of large numbers,
\begin{align*}
    \hat{T}^* & = E\left[ \lambda(R) Y + \left( \lambda(R) Y - Q_{\tau(R)}( \lambda(R) Y \mid  R)  \right) \left( \left(\bar{w}(R)-\underline{w}(R)\right)1\left\{ S > 0 \right\} -(1-\underline{w}(R)) \right) \right] (1+o_P(1)). 
\end{align*}

Under our assumptions, the target object can be written as:
\begin{align*}
    T & = E\left[\lambda(R)Y + S a(\underline{w}(R),\bar{w}(R), S)\right] \\
    &= E\left[\lambda(R)Y + (\lambda (R)Y - Q_{\tau(R)}( \lambda(R) Y \mid R)) (\left(\bar{w}(R)-\underline{w}(R)\right)1\left\{ S > 0 \right\} -(1-\underline{w}(R)))\right]. 
\end{align*}

Note that $\hat{T}^* = T (1 + o_{P}(1)$. Either $T$ is finite, so that $\hat{T}^* = T + o_{P}(1)$, or $T$ is infinite, so that $\hat{T}^* = T$ with probability tending to one. In either event, $\hat{T}^* - T = o_{P}(1)$.

\end{proof}

\end{document}